\crefname{hypothesis}{Hypothesis}{Hypotheses}
\title{Optimal Trading with Signals and Stochastic Price Impact\thanks{{J-PF was supported by NSF grant DMS-1814091. SJ would like to acknowledge the support of the Natural Sciences and Engineering Research Council of Canada,
funding reference numbers RGPIN-2018-05705 and RGPAS-2018-522715.}}}
\author{Jean-Pierre Fouque\thanks{Department of Statistics and Applied Probability, University of California, Santa Barbara, United States 
 (\email{fouque@pstat.ucsb.edu}, \url{http://fouque.faculty.pstat.ucsb.edu/}).}
\and Sebastian Jaimungal\thanks{Department of Statistical Sciences, University of Toronto, Canada 
 (\email{sebastian.jaimungal@utoronto.ca}, \url{http://sebastian.statistics.utoronto.ca/}).}
\and Yuri F. Saporito\thanks{School of Applied Mathematics (EMAp), Getulio Vargas Foundation (FGV), Brazil
 (\email{yuri.saporito@fgv.br}, , \url{http://yurisaporito.com/}).}}
\newcommand{\ds}{\displaystyle} 
\newcommand{\eps}{\varepsilon} 
\newcommand{\cB}{\mathcal{B}}
\newcommand{\cA}{\mathcal{A}}
\newcommand{\cL}{\mathcal{L}}
\newcommand{\cH}{\mathcal{H}}
\newcommand{\cX}{\mathcal{X}}
\newcommand{\bR}{\mathbb{R}}
\newcommand{\bA}{\mathbb{A}}
\newcommand{\bE}{\mathbb{E}}
\newcommand{\bB}{\mathbb{B}}
\newcommand{\bfmu}{\boldsymbol{\mu}}
\newcommand{\bfrho}{\boldsymbol{\rho}}
\newcommand{\bfW}{\boldsymbol{W}}
\newcommand{\bfgamma}{\boldsymbol{\gamma}}
\newcommand{\bfb}{\mathbf{b}}
\newcommand{\bfV}{\mathbf{V}}
\newcommand{\seb}[1]{{\color{black} #1}}
\newcommand{\yuri}[1]{{\color{black} #1}}
\renewcommand{\P}{\mathbb{P}}
\newcommand{\F}{\mathcal{F}}
\begin{document}

\maketitle

% REQUIRED
\begin{abstract}
Trading frictions are stochastic. They are, moreover, in many instances fast-mean reverting. Here, we study how to optimally trade in a market with stochastic price impact and study approximations to the resulting optimal control problem using singular perturbation methods. We prove, by constructing sub- and super-solutions, that the approximations are accurate to the specified order. Finally, we perform some numerical experiments to illustrate the effect that stochastic trading frictions have on optimal trading.
\end{abstract}

% REQUIRED
\begin{keywords}
  Algorithmic trading, singular perturbation,  multiscale modeling, non-linear PDE
\end{keywords}

% REQUIRED
\begin{AMS}
 91B26, 93C70, 93E20
\end{AMS}

\section{Introduction}

Trading frictions exist across all electronic markets and stem from a multitude of factors. In quote driven markets, liquidity providers place  limit orders to buy and sell assets, while liquidity takers use market orders to extract this liquidity, sometimes by walking the limit order book (LOB), but often by picking off available volume at the touch (the volume at the best bid/ask prices). In either case, trading quickly (or with large volume) induces a trading cost while trading slowly introduces price risk. The seminal work of \cite{almgren2001optimal} study how to optimally trade in such environments. \cite{almgren2012optimal} extends the analysis to the case of stochastic liquidity and volatility and, from numerical methods applied to the resulting Hamilton-Jacobi-Bellman (HJB) equation, shows that stochastic liquidity is indeed an important component in optimal execution.

\begin{wrapfigure}{r}{0.4\textwidth}
    \centering
    \includegraphics[width=0.4\textwidth]{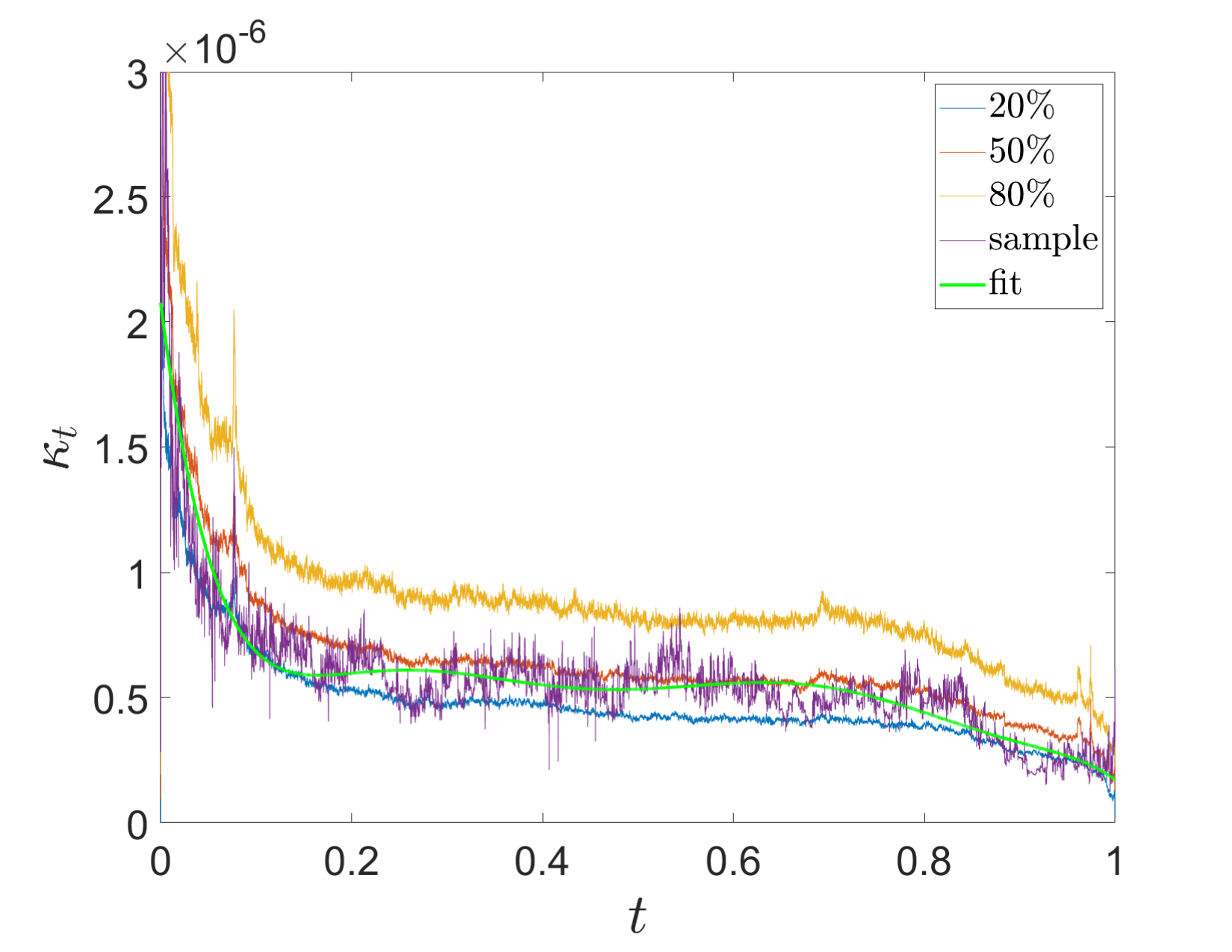}
    \caption{Sample path (May 28, 2014), two fits for $\kappa(t)$, and quantiles for $\kappa_t$ for MSFT using 2014 data. \seb{$t=0,1$ is the start,end of the trading day.} }
    \label{fig:kappa_fit}
\end{wrapfigure}
While the cost of trading fluctuates throughout the day, markets react quickly -- sometimes on the timescale of milliseconds or even microseconds.
The fast-mean-reverting aspect of the temporary price impact is documented in \cite{CarteaJaimungal2015}. For example, Figure \ref{fig:kappa_fit} shows (i) a sample path of trading impact (measured as the effective cost of walking the limit order book per unit of volume) denoted $\kappa_t$, (ii) a deterministic fit (denoted by $\kappa(t)$, see Section \ref{sec:real-data}) to the sample path, and (iii) the $20\%$, $50\%$, and $80\%$ quantiles through the day (across all of 2014 data) for Microsoft (MSFT).

Based on these observations, here we study how stochastic price impact driven by a fast mean-reverting process affects the optimal execution of trades. Fast mean-reverting (FMR) models have appeared widely in stochastic volatility models, as first proposed in \cite{fouque2000mean} and \cite{fouque2000derivatives}, \cite{fouque2003multiscale}, and extended and applied in numerous areas including stochastic interest models \cite{cotton2004stochastic}, commodity models \cite{hikspoors2008asymptotic,fouque2014multiscale}, credit risk \cite{papageorgiou2008multiscale}, and portfolio optimisation problems \cite{fouque2017portfolio}. \yuri{As far as the authors are aware, the only other reference that considers FMR models in the context of algorithmic trading is \cite{chan2015optimal}. However, they model volatility as FMR whereas we consider stochastic impact. Additionally, differently from our paper, they consider infinity time horizon setting and do not provide a rigorous proof of their approximation.}

% Seb - stochastic impact, trading signals, proof of convergence for these types of stoch control, asymptotics for non-linear PDEs.
\cite{graewe2018smooth} studies stochastic impact models that generalise those in \cite{almgren2012optimal} and develop a novel approach for looking at a general class of optimal control problems with singular terminal state constraints.  \cite{moazeni2013optimal} analyzes the effect of jumps in prices to model the uncertain price impact from other traders.
\cite{cheridito2014optimal} examines discrete-time optimal execution problem with stochastic volatility and liquidity and (for linear criterion) characterise the optimal strategy in terms of a certain forward-backward-stochastic-differential equation (FBSDE). \cite{barger2019optimal}, which is perhaps the closest work to ours, analyses stochastic impact by making use of a coefficient expansion of the resulting HJB equation and provide a formal expansion of the approximate strategy (but provides no proof of accuracy and does not utilise the FMR property).  A distinct line of work looks at modeling impact through the dynamics of the limit order book (LOB), where orders walk the LOB and after some time it recovers -- this is called the resilience of the LOB. \cite{obizhaeva2013optimal} is the earliest work along these lines, and many others generalized the approach in various directions. For example, \cite{fruth2019optimal} studies a stochastic ``illiquidity'' process that induces costs when the trader's position jumps, \cite{siu2019optimal} look at regime switching driven market resilience. For an overview of optimal execution, and more generally algorithmic trading and market microstructure, see, e.g., \cite{cartea2015algorithmic}, \cite{gueant2016financial}, and
\cite{laruelle2018market}.

We distinguish our work from the extant literature in several ways. First, to the authors knowledge, this is the first example of an approximate closed-from solution for stochastic impact that reflects the FMR behaviour observed in the market. Second, we include not only stochastic impact, but also stochastic trading signals (which may be seen as stemming from order-flow or other factors). Thirdly, we provide proof that the approximate optimal strategy are indeed approximately optimal to the order specified.

The remainder of this paper is organized as follows: Section \ref{sec:main} introduces our framework, performs a dimensional reduction, and introduces the approximate optimal strategy whose accuracy is addressed later. Section \ref{sec:MR-trading-signal} provides an explicit trading signal model where the approximate optimal strategy may be computed exactly. Section \ref{sec:real-data} shows how to estimate parameters from data and applies the approximate optimal strategy in a simulation environment. Section \ref{sec:approx_proof} provides a proof accuracy of the approximation. %Finally section \ref{sec:conclusions} closes with some concluding remarks.

\section{Approximate Optimal Strategy}
\label{sec:main}

In this paper, we investigate optimal trading in environments where price impact is stochastic and driven by a fast mean-reverting factor and the trader incorporates price signals. This generalizes the results in  \cite{CarteaJaimungal2015} for optimal execution with linear temporary and permanent price impacts as well as those studied in \cite{lehalle2019incorporating} -- who looks at particular cases of \cite{CarteaJaimungal2015} \seb{and reinterpret the stochastic order-flow model developed in \cite{CarteaJaimungal2015}  as a trading signal. Indeed, several others including \cite{donnelly2020optimal,belak2018optimal,forde2021optimal} investigate how trading signals modify the optimal strategy, but none, as far as we are aware, also include stochastic impact.}.

Specifically, let $(\Omega,\P,\F=(\F_t)_{t\ge0})$ denote a complete filtered probability space. What generates the filtration is  specified shortly. On this probability space, we also introduce a number of stochastic processes: the asset price process $S^\nu=(S_t^\nu)_{t\ge0}$, the trader's  cumulative cash process $X^\nu=(X_t^\nu)_{t\ge0}$, the trader's cumulative inventory $Q^\nu=(Q_t^\nu)_{t\ge0}$, the trading signals $\bfmu=(\mu_t^1,\dots,\mu_t^d)_{t\ge0}$, and the driver of stochastic impact $Y^\eps=(Y^\eps_t)_{t\ge0}$. The process $\nu=(\nu_t)_{t\ge0}$ denotes the trader's rate of trading, and this is what the trader controls -- hence the superscript on the various processes that are directly affected by the trader's actions. The various stochastic processes are assumed to satisfy the system of SDEs:
%\sebtodo{I presume we should state that $g$, $\beta$, and $k(t,y)$ are all non-negative?}
\begin{align}\label{eq:model_fast}
\begin{cases}
dS_t^\nu = (\bfgamma \cdot \bfmu_t + b\,\nu_t)\,dt + \sigma \,dW_t,
\\[5pt]
\ds dX_t^\nu = -\left(S_t^\nu + k(t,Y_t^\eps)\nu_t \right)\nu_t\, dt,
\\[5pt]
dQ_t^\nu = \nu_t\, dt,\\[5pt]
d\bfmu_t =c(\bfmu_t)\,dt + g(\bfmu_t)\,d\bfW'_t,\quad \text{and}
\\[5pt]
\ds dY_t^\eps = \frac{1}{\eps}\, \alpha(Y_t^\eps)\,dt + \frac{1}{\sqrt{\eps}} \beta(Y_t^\eps)\, dW_t^*\;.
\end{cases}
\end{align}
Here, $b$ is the permanent price impact and $k(t,y)$ is the temporary price impact which is driven by the fast mean-reverting process $Y^\eps$. The processes $W$, $\bfW'$ and $W^*$ are correlated Brownian motions, where both $W$ and $W^*$ are one-dimensional and $\bfW'$ is $d$-dimensional. Moreover, we assume $d[W'_i,W^*]_t = \rho_i\, dt$, $i\in\{1,\dots,d\}$, where $\bfrho:=(\rho_1,\dots,\rho_d) \in (-1,1)^d$ and  the joint correlation structure of the three Brownian motions is positive definite.
%\seb{Do we have a symbol for other correlations?}\yuri{I don't think it was needed}
The filtration $\F=(\F_t)_{t\ge0}$ is taken to be the natural filtration induced by these Brownian motions. \yuri{The functions $k$, $g$ and $\beta$ are assumed to be non-negative. Additionally, the functions $\alpha$ and $\beta$ that define the dynamics of $Y^\eps$ are such that it has a unique stationary distribution independent of $\eps$.}

The dynamics in \eqref{eq:model_fast} may be interpreted as follows. In the absence of the trader's actions, the asset price process has a predictable drift $\bfgamma \cdot \bfmu$ that stems for the (vector valued)  trading signal $\bfmu$. The trader's action, however, pushes prices in the direction of their trades. In the current setup, we view $\nu_t>0$ as buying, therefore while the trader sends buy/sell orders they push the price upwards/downwards. Noise traders cause prices to fluctuate, and those fluctuations are captured by the Brownian motion $W$. The trader's cumulative cash process increases when the sell, but they do not receive the midprice $S_t^\nu$, rather they incur additional liquidity costs. Those costs are approximated as being linear in the trading speed $\nu$, and the coefficient of the impact is given by $k(t,Y_t^\eps)$ -- which incorporates the fast mean-reverting nature of the trading frictions, as well as its diurnal pattern as observed in, e.g. Figure \ref{fig:kappa_fit}. To model this pattern, we  assume that
\begin{align}
k(t,y) = \kappa(t) \left(1 + \eta(y)\right)^{-1},
\end{align}
$\kappa(t) > 0$, $t\in[0,T]$, and that $\eta > -1$.  \yuri{We chose to model the stochastic behavior of the temporary price impact as above in order to incorporate that the fact the variation is around a given mean function, denoted here by $\kappa$, and that $k(t,y)$ appears in the PDE of interest here as its inverse, cf. \eqref{eqn:PDE-after-feedback}}.
We denote the average with respect to the stationary distribution of $Y^\eps$ by $\langle \cdot \rangle$ and, without loss of generality, we impose the constraint that $\left\langle \eta \right\rangle = 0$.

\seb{To gain some intuition on the role that the correlation $\bfrho$ of $\bfW'$ and $W^*$ plays, consider the covariation of $k=(k(t,Y_t^\eps))_{t\ge0}$ and $\bfmu$. To this end, from Ito's lemma we have
\begin{align*}
dk_t =&\; -\frac{\kappa(t)}{(1+\eta(Y_t^\eps))^2}\,\eta'(Y_t^\eps)\,dY_t^\eps \\
&\;\;+ \tfrac12 \kappa(t)\Big((1+\eta(Y_t^\eps))^{-3}\,(\eta'(Y_t^\eps))^2-(1+\eta(Y_t^\eps))^{-2}\,\eta''(Y_t^\eps)\Big) d[Y^\eps,Y^\eps]_t
\end{align*}}
\seb{so that}
$$\color{black}
d[k,\bfmu]_t=-\frac{\eta'(Y_t^\eps)}{1+\eta(Y_t^\eps)}\,k_t\,g(\bfmu_t)\,\frac{1}{\sqrt{\eps}}\,\beta(Y_t^\eps)\,\bfrho\,dt.
$$
\seb{Hence, the instantaneous correlation of $k$ and $\bfmu$ is $\bfrho_{k,\bfmu}=-\frac{\eta'(Y_t^\eps)}{1+\eta(Y_t^\eps)}\,k_t\,\bfrho$, and its sign is \allowbreak $-\mathrm{sgn}(\eta'(Y_t^\eps)\,\bfrho)$. In the numerical investigations in \Cref{sec:real-data}, we set $\eta(y)=y$, so that the sign of the instantaneous correlation is determined by $-\bfrho$.}

The investor faces the following optimal control problem:
% \small
\begin{align}
H^\eps(t,x,S,\bfmu,q,y) &= \sup_{\nu \in \bA} \cH^\nu(t,x,S,\bfmu,q,y),
\quad \text{where}
\\
\cH^\nu(t,x,S,\bfmu,q,y) &= \bE_{t,S,q,x,\bfmu,y}\left[X_T^\nu + Q_T^\nu\left(S_T^\nu - \varphi \, Q_T^\nu \right) - \phi \int_t^T (Q_u^\nu)^2 du \right].\label{eqn:performance-criterion}
\end{align}
% \normalsize
In \eqref{eqn:performance-criterion}, $\phi$ is an inventory penalty parameter (and may be interpreted as stemming from a penalty on quadratic variation of the book-value of the traders position, or, more interestingly, from model uncertainty \cite{cartea2017algorithmic}); $\varphi$ is a liquidation penalty parameter which, if large, forces the trader to end with little or no inventory (see \cite{graewe2018smooth} for how to enforce exact liquidation with a singular terminal condition); $\bE_{t,S,q,x,\bfmu,y}[\cdot]$ denotes  conditional expectation given $S_t^\nu = S$, $Q_t^\nu = q$, $X_t^\nu = x$, $\bfmu_t = \bfmu$ and $Y_t^\eps = y$%\seb{should we not use $\bE_{t,S,q,x,\bfmu,y}$.. it is a bit heavy but more accurate in what the expectation contains.}
; and $\bA$ is the set of admissible controls that consist of $\mathcal{F}$-predictable processes  in $\mathbb{L}^2([0,T],\Omega)$, i.e. $\mathbb E[\int_0^T |\nu_t|^2\,dt]<+\infty$, taking values in $\cA:=\mathds{R}$.

Note, the investor's wealth process and the asset price can be removed from the optimization problem.  Firstly, notice that, by the product rule (and that $Q^\nu$ is absolutely continuous), we have
\begin{align}
\begin{split}
X_T^\nu + Q_T^\nu S_T^\nu &= x -\int_t^T \left(S_u^\nu + k(u,Y_u^\eps)\nu_u \right)\nu_u\, du + qS + \int_t^T S_u^\nu dQ_u^\nu + \int_t^T Q_u^\nu dS_u^\nu \\
&= x + qS - \int_t^T S_u^\nu \nu_u du - \int_t^T k(u,Y_u^\eps) \,\nu_u^2 du
+ \int_t^T S_u^\nu\nu_u du  + \int_t^T Q_u^\nu dS_u^\nu
\end{split}\\
&= x + qS - \int_t^T k(u,Y_u^\eps) \nu_u^2 du + \int_t^T Q_u^\nu dS_u^\nu.
\end{align}
Hence, we may write
\begin{equation}
\begin{split}
    \cH^\nu(t,x,S,\bfmu,q,y) = x &+ qS\,
+  \,\bE_{t,S,q,x,\bfmu,y}\left[- \varphi (Q_T^\nu)^2 - \int_t^T k(u,Y_u^\eps) \,\nu_u^2\, du
\right.
\\
&\left.\hspace*{9em}  + \int_t^T Q_u^\nu\, (\bfgamma \cdot \bfmu_u + b\nu_u)\,du - \phi \int_t^T (Q_u^\nu)^2 \,du \right].
\end{split}
\label{eqn:H-with-x-and-qS}
\end{equation}
Next, as  $(\bfmu_t, Q_t^\nu, Y_u^\eps)$ is Markovian, we may define
\begin{align}
Z(t,\bfmu,q,y,v) &= -k(t,y) v^2 + q (\bfgamma \cdot \bfmu + bv) - \phi q^2, \\
\mathcal{Z}^{\nu}(t,\bfmu,q,y) &= \bE_{t,q,\bfmu,y}\left[-\varphi(Q_T^\nu)^2 + \int_t^T Z(u,\bfmu_u,Q_u^\nu,Y_u^\eps,\nu_u)du \right], \label{eq:mathcal_Z}
\end{align}
and
%\seb{don't you already use the Markov property to write $\mathcal{Z}(t,\bfmu,q,y)?$}
\begin{align}\label{eq:h_eps}
h^\eps(t,\bfmu,q,y) = \sup_{\nu \in \bA} \mathcal{Z}^{\nu}(t,\mu,q,y),
\end{align}
where $\bE_{t,q,\bfmu,y}[\cdot]$ here means the conditional expectation given $Q_t^\nu = q$, $\bfmu_t = \bfmu$, and $Y_t^\eps = y$. Hence, comparing \eqref{eqn:H-with-x-and-qS} with \eqref{eq:h_eps}, we may write\footnote{\yuri{One should notice that we are not following the usual argument of proposing an ansatz for $H^\eps$ of the form \eqref{eqn:first-ansatz-for-H} but actually showing that $H^\eps$ is indeed of that form.}}
\begin{align}
H^\eps(t,x,S,\bfmu,q,y) = x + qS + h^\eps(t,\bfmu,q,y).
\label{eqn:first-ansatz-for-H}
\end{align}
From the dynamic programming principle, we expect that $h^\eps$ satisfies the associated Hamilton-Jacobi-Bellman (HJB) equation
\begin{multline}\label{eq:pdf_h_eps}
\partial_t h^\eps
+ \frac{1}{\eps} \cL_0 h^\eps + \cL_{\bfmu} h^\eps - \phi q^2  + \frac{1}{\sqrt{\eps}} \beta(y) \sum_{i=1}^k (g_i(\bfmu) \cdot \bfrho )\partial_{\mu_i y} h^\eps
\\
+ (\bfgamma \cdot \bfmu)q + \sup_{v \in \cA} \left\{  (bq + \partial_q h^\eps)v   - k(t,y)v^2 \right\} = 0,
\qquad\qquad
\end{multline}
with terminal condition $h^\eps(T,\bfmu,q,y) = -\varphi \,q^2$, where $g_i(\bfmu)$ is the $i$-th row of $g(\bfmu)$. Here, $\cL_0$ and $\cL_{\bfmu}$ are the infinitesimal generators
\begin{align}
\cL_0 &= \alpha(y) \partial_y + \tfrac{1}{2}\beta^2(y) \partial_{yy},
\qquad \text{and} \\
\cL_{\bfmu}&= c(\bfmu) \cdot \partial_{\bfmu} + \tfrac{1}{2} \mbox{Tr}((g(\bfmu) g(\bfmu)^T)\partial_{\bfmu \bfmu}),
\end{align}
respectively.

%\seb{I think we need to give an outline of the approach we take. something like...}
This HJB equation is not solvable in closed-form for a multitude of reasons. Rather, in the spirit of \cite{fouque_portfolio_selection} for portfolio optimization problems and elaborated in \cite{multiscale_fouque_new_book},  we aim to obtain an approximate solution that holds when the stochastic  price impact driving factor $Y^\eps$ is fast mean-reverting. We also prove that the approximation we propose is correct to the appropriate order, and we do so using the following broad steps:
\begin{enumerate}[label=({\roman*})]
    \item determine the feedback form of the optimal control and derive the non-linear PDE that $h^\eps$ satisfies,

    \item notice that $h^\eps$ may be represented as a quadratic form in $q$, but \yuri{with possibly more complex non-linearity} in all other state variables,

    \item perform a formal expansion of each coefficient of $h^\eps$ in powers of $\eps$,

    \item notice that constant (in $q$) and linear  (in $q$) functions satisfy linear PDEs and apply, now classical, methods from \cite{multiscale_fouque_new_book} to prove accuracy, and finally

    \item the quadratic (in $q$) function satisfies a non-linear PDE, and for this we develop a super and sub-solution approach to
prove that the approximation error is controlled.
\end{enumerate}
\vspace*{1em}

Proceeding along the above lines, we first apply the first-order condition (FOC) to obtain the optimal feedback control as
\begin{align}
\nu_*^\eps(t,\bfmu,q,y) = \frac{b\,q + \partial_q h^\eps(t,\bfmu,q,y) }{2\,k(t,y)},
\end{align}
and, upon substitution of the FOC, the HJB equation \eqref{eq:pdf_h_eps} reduces to
\begin{multline}
\left(\partial_t + \cL_{\bfmu} \right) h^\eps - \phi\, q^2  + (\bfgamma \cdot \bfmu)\, q  + \frac{\left(b\,q + \partial_q h^\eps \right)^2}{4\,k(t,y)}
\\
+ \frac{1}{\sqrt{\eps}} \beta(y) \sum_{i=1}^k (g_i(\bfmu) \cdot \bfrho )\partial_{\mu_i y} h^\eps + \frac{1}{\eps} \,\cL_0 h^\eps  =0\,,
\hspace*{5em}
\label{eqn:PDE-after-feedback}
\end{multline}
where, for readability, we suppress the dependence of $h^\eps$ on its arguments.

We define a non-linear differential operator $\cL(k)$ that corresponds to the temporary price impact $k$, and it acts on functions $f$ as follows
\begin{align}
\cL(k)[f] := \left(\partial_t + \cL_{\bfmu} \right)f - \phi \,q^2 + (\bfgamma \cdot \bfmu) \,q   + \frac{\left(b\,q + \partial_qf\right)^2}{4\,k}\,.
\end{align}
This operator is, in general, non-linear due to the last term.
With this notation, we may write the PDE \eqref{eqn:PDE-after-feedback} as
\begin{align}\label{eq:final_pde}
&\cL(k(t,y))[h^\eps] + \left(\frac{1}{\sqrt{\eps}} \cL_1 \ + \frac{1}{\eps} \cL_0\right) h^\eps  = 0\;,
\end{align}
where
\begin{align}
\cL_1 &= \beta(y) \sum_{i=1}^k (g_i(\bfmu) \cdot \bfrho )\partial_{\mu_i y}\;.
\end{align}

\subsection{Expansion in powers of $q$}

As in the case with constant price impact studied in \cite{CarteaJaimungal2015}, but including trading signals, we may write $h^\eps$ as a quadratic polynomial in $q$ without any loss of generality. This is due to the form of the terminal conditions and the coefficients of the PDE \eqref{eq:final_pde}. To this end, we write
\begin{align}
h^\eps(t,\bfmu,q,y) = h^{(0),\eps}(t,\bfmu,y) + h^{(1),\eps}(t,\bfmu,y)\,q + h^{(2),\eps}(t,\bfmu,y)\, q^2
\label{eqn:h-expansion-in-q}
\end{align}
and aim to determine approximations for  $(h^{(i),\eps})_{i=1,2,3}$.
Inserting \eqref{eqn:h-expansion-in-q}  into  \eqref{eq:final_pde}, collecting powers of $q$, and setting the coefficient of each power  to zero, we find the following coupled system of PDEs for $(h^{(i),\eps})_{i=1,2,3}$
% \small
\begin{align}
&\begin{cases}
\displaystyle
(\partial_t + \cL_{\bfmu}) h^{(2),\eps} - \phi + \frac{(b + 2h^{(2),\eps})^2}{4k(t,y)} + \frac{1}{\sqrt{\eps}} \cL_1 h^{(2),\eps} + \frac{1}{\eps} \cL_0 h^{(2),\eps} = 0\,,
\\[0.75em]
h^{(2),\eps}(T,\bfmu,y) = - \varphi\,,
\end{cases} \label{eq:pde_h2eps}
\\[1em]
&\begin{cases}
\displaystyle
(\partial_t + \cL_{\bfmu}) h^{(1),\eps} + \bfgamma \cdot \bfmu + \frac{(b + 2h^{(2),\eps})h^{(1),\eps}}{2\,k(t,y)} + \frac{1}{\sqrt{\eps}} \cL_1 h^{(1),\eps} + \frac{1}{\eps} \cL_0 h^{(1),\eps} = 0\,,
\\[0.75em]
h^{(1),\eps}(T,\bfmu,y) = 0\,,
\end{cases} \label{eq:pde_h1eps}
\\[1em]
&\begin{cases}
\displaystyle
(\partial_t + \cL_{\bfmu}) h^{(0),\eps} + \frac{(h^{(1),\eps})^2}{4k(t,y)} + \frac{1}{\sqrt{\eps}} \cL_1 h^{(0),\eps} + \frac{1}{\eps} \cL_0 h^{(0),\eps} = 0\,,
\\[0.75em]
h^{(0),\eps}(T,\bfmu,y) = 0\,.
\end{cases} \label{eq:pde_h0eps}%
\end{align}%
% \normalsize
The PDEs \eqref{eq:pde_h0eps} and \eqref{eq:pde_h1eps} for $h^{(0),\eps}$ and $h^{(1),\eps}$ are linear. They do, however,  have non-linear potential terms. These potentials are ``known'' when solving the problem sequentially in the order $h^{(2),\eps}$ then $h^{(1),\eps}$ and finally $h^{(0),\eps}$.  Contrasting, the PDE \eqref{eq:pde_h2eps} for $h^{(2),\eps}$ is non-linear. Similarly to the constant price impact case studied in \cite{CarteaJaimungal2015}, \eqref{eq:pde_h2eps} contains no coefficients or terminal conditions that depend on $\bfmu$, hence the solution $h^{(2),\eps}$ is not a function of $\bfmu$, and we write $h^{(2),\eps}(t,y)$. Performing a constant shift
\begin{align}
\cX^\eps(t,y) := h^{(2),\eps}(t,y) + \frac{b}{2}
\end{align}
we may write \eqref{eq:pde_h2eps} and \eqref{eq:pde_h1eps}  as (the PDE  \eqref{eq:pde_h0eps} is unmodified)
\begin{align}
&\begin{cases}
\displaystyle \partial_t \cX^\eps - \phi + \frac{(\cX^\eps)^2}{k(t,y)} + \frac{1}{\eps} \cL_0 \cX^\eps = 0\,,
\\[0.75em]
\cX^\eps(T,y) = - \varphi + b/2\,,
\end{cases} \label{eq:pde_chieps}
\\[1em]
&\begin{cases}
\displaystyle (\partial_t + \cL_{\bfmu}) h^{(1),\eps} + \bfgamma \cdot \bfmu + \frac{\cX^\eps h^{(1),\eps}}{k(t,y)} + \frac{1}{\sqrt{\eps}} \cL_1 h^{(1),\eps} + \frac{1}{\eps} \cL_0 h^{(1),\eps} = 0\,,
\\[0.75em]
h^{(1),\eps}(T,\bfmu,y) = 0\,.
\end{cases}\label{eq:pde_h1_eps_new}
\end{align}
%\seb{seems out of order? should we show these formulae after sec 2.2?}
% Similar to the deterministic case studied in the section below, Feynman-Kac's representation gives us
% \small
% \begin{align}
% h^{(1),\eps}(t,\bfmu,y) &= \bfgamma \cdot \int_t^T \bE\left[e^{\int_t^s \frac{\cX^\eps(u, Y_u^\eps)}{k(u, Y_u^\eps)} du}
% \;\bfmu_s \  \Big| \ \bfmu_t = \bfmu, Y_t^\eps=y \right] ds, \label{eq:h1cj_full}\\
% h^{(0),\eps}(t,\bfmu,y) &= \int_t^T \;\bE\left[  \frac{1}{4\,k(s,Y_s^\eps)} (h^{(1),\eps}(s,\bfmu_s,Y_s^\eps))^2 \ | \ \bfmu_t = \bfmu, Y_t^\eps=y \right]ds. \label{eq:h0cj_full}
% \end{align}
% \normalsize

\subsection{The case of deterministic temporary price impact}\label{sec:deterministic_k}

The case of constant $k$ is fully characterized in \cite{CarteaJaimungal2015} and it is straightforward to generalize that result when $k$ is a deterministic function of time: $k(t,\cdot) \equiv \mathsf{k}(t)$. Indeed, consider the PDE
\begin{align}
\begin{cases}
\left(\partial_t + \cL_{\bfmu} \right)h  - \phi \,q^2 + (\bfgamma \cdot \bfmu) \,q   + \frac{\left(b\,q + \partial_qh\right)^2}{4\,\mathsf{k}(t)} = 0,\\
h(T,\bfmu,q) =-\varphi q^2.
\end{cases}
\end{align}
Next, $h$ may be written as
\begin{align}\label{eq:h_deterministic}
h(t,\bfmu,q) &= h^{(0)}(t,\bfmu) +  h^{(1)}(t,\bfmu) q +  h^{(2)}(t)q^2,
\end{align}
where $\cX(t) = h^{(2)}(t) + b/2$ satisfies the Ricatti equation
\begin{align}
\label{eqn:Ricatti-deterministic}
\left\{
\begin{array}{rl}
\displaystyle \cX'(t) - \phi + \frac{1}{\mathsf{k}(t)}\cX^2(t) &= 0,\\
\cX(T) &= -\varphi + b/2,
\end{array}
\right.
\end{align}
and $h^{(1)}, h^{(0)}$ are given by
\begin{align}
h^{(1)}(t,\bfmu) &=  \int_t^T e^{\int_t^s \frac{\cX(u)}{\mathsf{k}(u)} du}
\;\bE\left[\, \bfgamma \cdot \bfmu_s \  | \ \bfmu_t = \bfmu \ \right] ds,
\quad \text{and}
\label{eq:h1cj}
\\
h^{(0)}(t,\bfmu) &= \int_t^T \frac{1}{4\,\mathsf{k}(s)} \;\bE\left[ \ (h^{(1)})^2(s,\bfmu_s) \ | \ \bfmu_t = \bfmu \ \right]ds. \label{eq:h0cj}
\end{align}
See Appendix \ref{app:proofs_time_dep} for the derivation of these formulas. However, no closed-form solution for $\cX$ is available in this case.%\seb{In $h_{CJ}^{(0)}$ should the expectation have $(h_{CJ}^{(1)}(s,\bfmu_s))^2$ and not $h_1^2$?}

% The optimal control in this case is
% \begin{align}
% v_*(t,\bfmu,q) = -\frac{1}{{2\,\kappa(t)}}\left(b\,q + \partial_q \hcj\right) .
% \end{align}

\subsection{Perturbation Framework}\label{sec:perturbation}

Motivated by  the perturbation framework of \cite{multiscale_fouque_new_book}, we first carry out formal expansions and provide the rigorous proof of accuracy of the approximation  in Section \ref{sec:approx_proof}.

To proceed with the approximations,  consider the formal expansion of $\cX^\eps$ in powers\footnote{\yuri{It is not necessary to consider the formal expansion in powers of $\sqrt{\eps}$ because the PDE \eqref{eq:pde_chieps} is affected only by $\tfrac{1}{\eps}\cL_0$ whilst the PDEs \eqref{eq:pde_h0eps} and \eqref{eq:pde_h1_eps_new} are affected by both $\tfrac{1}{\eps}\cL_0$ and $\tfrac{1}{\sqrt{\eps}}\cL_1$.}} of $\eps$
\begin{align}
\cX^\eps = \cX_0 + \eps \cX_1 + \eps^2 \cX_2 + \cdots,
\end{align}
and of $h^{(i),\eps}$, $i=0,1$, in powers of $\sqrt{\eps}$:
\begin{align}
h^{(i),\eps} = h^{(i)}_0+ \sqrt{\eps} \ h^{(i)}_1 + \eps \ h^{(i)}_2 + \cdots.
\end{align}
Inserting these formal expansions into the PDEs (\ref{eq:pde_chieps}), (\ref{eq:pde_h1_eps_new}) and (\ref{eq:pde_h0eps}), collecting terms of like powers of $\eps$, and setting each to zero separately, we find, up to order $\sqrt{\eps}$, the following system of PDEs:
\begin{subequations}
\begin{align}
\cL_0 \cX_0 &= 0, \label{eq:chi-1order}
\\[0.25em]
\cL_0 \cX_1 + \partial_t \cX_0 - \phi + \frac{\cX_0^2}{k(t,y)} &= 0, \label{eq:chi0order}
\\[0.25em]
\cL_0 h^{(i)}_0 &= 0, \label{eq:hi-1order}
\\[0.25em]
\cL_1 h^{(i)}_0 + \cL_0 h^{(i)}_1 &= 0, \label{eq:hi-12order}
\\[0.25em]
(\partial_t + \cL_{\bfmu}) h^{(1)}_0 + \bfgamma \cdot \bfmu + \frac{\cX_0 h^{(1)}_0}{k(t,y)} + \cL_1 h^{(1)}_1 + \cL_0 h^{(1)}_2 &= 0, \label{eq:h10order}
\\[0.25em]
(\partial_t + \cL_{\bfmu}) h^{(0)}_0 + \frac{(h^{(1)}_0)^2}{4k(t,y)} + \cL_1 h^{(0)}_1 + \cL_0 h^{(0)}_2 &= 0, \label{eq:h00order}
\\[0.25em]
(\partial_t + \cL_{\bfmu}) h^{(1)}_1 + \frac{\cX_0 h^{(1)}_1}{k(t,y)} + \cL_1 h^{(1)}_2 + \cL_0 h^{(1)}_3 &= 0, \label{eq:h112order}
\\[0.25em]
(\partial_t + \cL_{\bfmu}) h^{(0)}_1 + \frac{h^{(1)}_0 h^{(1)}_1}{2\,k(t,y)} + \cL_1 h^{(0)}_2 + \cL_0 h^{(0)}_3 &= 0, \label{eq:h012order}
\end{align}%
\end{subequations}%
with terminal conditions $\cX_0(T,y) = - \varphi + b/2$, $h^{(i)}_0(T,\bfmu,y) = 0$ and $h^{(i)}_1(T,\bfmu,y)= 0$, $i=0,1$.

\begin{proposition}[Zero-order Terms]
A solution for the zero-order approximation terms $\cX_0$, $h^{(0)}_0$ and $h^{(1)}_0$ that solves PDEs (\ref{eq:chi-1order}), (\ref{eq:hi-1order}) and (\ref{eq:hi-12order}) and centers the Poisson equations (\ref{eq:chi0order}), (\ref{eq:h10order}) and (\ref{eq:h00order}) are given by Equations (\ref{eqn:Ricatti-deterministic}), (\ref{eq:h0cj}) and (\ref{eq:h1cj}), respectively, with deterministic temporary price impact $\mathsf{k}(t) = \kappa(t)$.
\end{proposition}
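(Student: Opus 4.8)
The plan is to treat this as a verification statement: exhibit the candidate functions and check that each of the listed equations holds, namely the three that involve $\cL_0$ or $\cL_1$ directly together with the three solvability (centering) conditions for the Poisson equations. The two structural facts I would lean on throughout are: (i) the kernel of the ergodic generator $\cL_0$, restricted to functions with at most polynomial growth in $y$, consists exactly of functions that do not depend on $y$; and (ii) the Fredholm alternative for $\cL_0$, i.e. the Poisson equation $\cL_0 u = \psi$ admits a solution $u$ of controlled growth if and only if $\langle \psi\rangle=0$, where $\langle\cdot\rangle$ denotes expectation under the (unique, $\eps$-independent) invariant law of $Y$.

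First I would set $\cX_0$ to be the solution of the Riccati problem (\ref{eqn:Ricatti-deterministic}) with $\mathsf{k}=\kappa$, then $h^{(1)}_0$ to be given by (\ref{eq:h1cj}) with $\mathsf{k}=\kappa$, and then $h^{(0)}_0$ by (\ref{eq:h0cj}) using that same $h^{(1)}_0$. Since these are by construction functions of $t$, resp.\ of $(t,\bfmu)$, only, they lie in $\ker\cL_0$, so (\ref{eq:chi-1order}) and (\ref{eq:hi-1order}) are immediate; and because $\cL_1$ carries a factor $\partial_y$ it annihilates any $y$-independent function, so $\cL_1 h^{(i)}_0=0$, and one may take the $O(\sqrt\eps)$ correctors $h^{(i)}_1\equiv0$, which satisfies (\ref{eq:hi-12order}).

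Next I would check the three centering conditions. Averaging (\ref{eq:chi0order}) against the invariant law, pulling the $y$-independent $\cX_0$ outside, and using $\langle 1/k(t,\cdot)\rangle = \kappa(t)^{-1}\langle 1+\eta\rangle = \kappa(t)^{-1}$ — the normalization $\langle\eta\rangle=0$ is exactly what forces $\mathsf{k}=\kappa$ here — I obtain $\partial_t\cX_0 - \phi + \cX_0^2/\kappa = 0$ with $\cX_0(T)=-\varphi+b/2$, which is (\ref{eqn:Ricatti-deterministic}). Averaging (\ref{eq:h10order}), using $\cL_1 h^{(1)}_1 = 0$, the $y$-independence of $\cX_0$ and $h^{(1)}_0$, and $\langle 1/k\rangle=\kappa^{-1}$ again, I get the linear PDE $(\partial_t+\cL_{\bfmu})h^{(1)}_0 + \bfgamma\cdot\bfmu + \cX_0 h^{(1)}_0/\kappa = 0$ with $h^{(1)}_0(T,\bfmu)=0$, whose Feynman--Kac solution, with the deterministic exponential weight pulled out of the expectation, is precisely (\ref{eq:h1cj}). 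Likewise (\ref{eq:h00order}) centers to $(\partial_t+\cL_{\bfmu})h^{(0)}_0 + (h^{(1)}_0)^2/(4\kappa) = 0$ with $h^{(0)}_0(T,\bfmu)=0$, with Feynman--Kac solution (\ref{eq:h0cj}). These are exactly the equations of Section~\ref{sec:deterministic_k} with $\mathsf{k}=\kappa$, and the representation formulas are the ones derived in Appendix~\ref{app:proofs_time_dep}.

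The step I expect to be the real obstacle is not any of these algebraic identifications but the well-posedness bookkeeping underneath them: one needs that the Riccati equation (\ref{eqn:Ricatti-deterministic}) does not explode on $[0,T]$, so that $\cX_0$ and hence the weight $e^{\int_t^s\cX_0/\kappa}$ in (\ref{eq:h1cj}) and (\ref{eq:h0cj}) are well defined, and that the right-hand sides of the Poisson equations are smooth enough and of at most polynomial growth in $y$ for the Fredholm alternative (and for existence of the correctors $\cX_1$, $h^{(i)}_2$) to be legitimate. The former is handled in Section~\ref{sec:deterministic_k}; the latter is guaranteed by the standing assumptions on $\alpha,\beta,\kappa$ and is revisited in the accuracy analysis of Section~\ref{sec:approx_proof}.
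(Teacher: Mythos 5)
Your proof is correct and follows essentially the same route as the paper: impose $y$-independence of the leading terms via $\ker\cL_0$, then read off the Riccati ODE and the linear PDEs for $h^{(1)}_0$ and $h^{(0)}_0$ as the centering (Fredholm) conditions of the Poisson equations, using $\langle 1/k(t,\cdot)\rangle=1/\kappa(t)$, and conclude with the Feynman--Kac representations of Section~\ref{sec:deterministic_k}. One small correction: do not set $h^{(i)}_1\equiv 0$ --- equation (\ref{eq:hi-12order}) only forces $h^{(i)}_1$ to be independent of $y$ (which already gives $\cL_1 h^{(i)}_1=0$, all that is needed to center (\ref{eq:h10order}) and (\ref{eq:h00order})), and its $(t,\bfmu)$-dependence is pinned down only at the next order, where it is generically nonzero, cf.\ (\ref{eqn:h1-phi1}).
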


\begin{proof}

From the first equation (\ref{eq:chi-1order}), we choose $\cX_0 = \cX_0(t)$ independent of $y$. The second equation (\ref{eq:chi0order}) is a Poisson equation for $\cX_1$ and its centering condition is
\begin{equation*}
\left\langle\partial_t \cX_0 - \phi + \frac{\cX_0^2}{k(t,\cdot)} \right\rangle = 0 \quad \Leftrightarrow\quad
\partial_t \cX_0 - \phi + \frac{\cX_0^2}{\kappa(t)} = 0\ ,
\end{equation*}
because
\begin{align}
\left\langle \frac{1}{k(t,\cdot)}\right \rangle = \left\langle \frac{1}{\kappa(t)}(1 + \eta(\cdot))\right \rangle = \frac{1}{\kappa(t)}(1 + \langle \eta \rangle) =   \frac{1}{\kappa(t)},
\end{align}
where the last equality follows from the standing assumption $\langle \eta \rangle=0$. This is the same Ricatti ODE as \eqref{eqn:Ricatti-deterministic} with $\mathsf{k} = \kappa$ and therefore we may write $\cX_0(t) =  \cX(t) = h^{(2)}(t) + b/2$.

Next, we turn to the PDEs for $h^{(i)}$. From (\ref{eq:hi-1order}), we choose $h^{(i)}_0 = h^{(i)}_0(t,\bfmu)$ independent of $y$. Therefore, the PDE (\ref{eq:hi-12order}) reduces to $\cL_0 h^{(i)}_1 = 0$. Consequently,  for $i=0,1$, we choose $h^{(i)}_1 = h^{(i)}_1(t,\bfmu)$ also independent of $y$, thus $\cL_1 h^{(i)}_1 = 0$. Equations (\ref{eq:h10order}) and (\ref{eq:h00order}) are Poisson equations for $h^{(1)}_2$ and  $h^{(0)}_2$, respectively, and their centering conditions are
\small
\begin{align*}
\left\langle (\partial_t + \cL_{\bfmu}) h^{(1)}_0 + \bfgamma \cdot \bfmu + \frac{\cX_0(t) h^{(1)}_0}{k(t,\cdot)} \right\rangle = 0
\quad
&\Leftrightarrow
\quad
(\partial_t + \cL_{\bfmu}) h^{(1)}_0 + \bfgamma \cdot \bfmu + \frac{\cX_0(t)}{\kappa(t)} h^{(1)}_0 = 0,\\
\left\langle (\partial_t + \cL_{\bfmu}) h^{(0)}_0 + \frac{(h^{(1)}_0)^2}{4k(t,\cdot)} \right\rangle = 0
\quad
&\Leftrightarrow
\quad
(\partial_t + \cL_{\bfmu}) h^{(0)}_0 + \frac{1}{4\kappa(t)}(h^{(1)}_0)^2 = 0.
\end{align*}
\normalsize
These PDEs for $h^{(0)}_0$ and $h^{(1)}_0$ result from the analysis in Section \ref{sec:deterministic_k} with $\mathsf{k} = \kappa$ (specifically corresponding to \eqref{eqn:h0-deterministic-PDE}
and \eqref{eqn:h1-deterministic-PDE}), and, hence, $h^{(0)}_0 = h^{(0)}$ and $h^{(1)}_0 = h^{(1)}$ as in equations \eqref{eq:h0cj} and \eqref{eq:h1cj}, respectively.
\end{proof}

\begin{proposition}[First-order Terms]
The Poisson equations (\ref{eq:h112order}) and (\ref{eq:h012order}) are centered when $h^{(1)}_1$ and $h^{(0)}_1$ solve the PDEs
\begin{align}
(\partial_t + \cL_{\bfmu}) h^{(1)}_1 + \frac{\cX_0(t)}{\kappa(t)} h^{(1)}_1 + \bfV \cdot F^{(1)}(t,\bfmu) = 0, \label{eq:pde_h11}\\
(\partial_t + \cL_{\bfmu}) h^{(0)}_1  + \frac{h^{(1)}_0}{2\,\kappa(t)}\left(h^{(1)}_1 + \bfV \cdot F^{(0)}(t,\bfmu)\right) = 0, \label{eq:pde_h01}
\end{align}
where
\begin{align}
&\bfV = -\bfrho\, \langle \beta \psi'\rangle, \quad F^{(1)}(t,\bfmu) = \frac{\cX_0(t)}{\kappa(t)} \sum_{i=1}^k g_i(\bfmu) \partial_{\mu_i} h^{(1)}_0(t,\bfmu),
\quad \text{and}
\label{eqn:F1}
\\
&F^{(0)}(t,\bfmu) = \sum_{i=1}^k g_i(\bfmu)\partial_{\mu_i} h^{(1)}_0(t,\bfmu),
\label{eqn:F0}
\end{align}
and $\psi$ is the solution of the centered Poisson equation
\begin{align}\label{eq:psi}
\cL_0 \psi(y) = \eta(y).
\end{align}
\end{proposition}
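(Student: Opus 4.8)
The plan is to use the Fredholm solvability (centering) condition for the generator $\cL_0$: since $Y^\eps$ has a unique stationary distribution, the Poisson equation $\cL_0 u = \zeta$ has a solution in $y$ if and only if $\langle\zeta\rangle = 0$. Viewing \eqref{eq:h112order} and \eqref{eq:h012order} as Poisson equations for $h^{(1)}_3$ and $h^{(0)}_3$, their centering conditions read
\begin{align*}
\big\langle (\partial_t + \cL_{\bfmu}) h^{(1)}_1 + \tfrac{\cX_0}{k(t,\cdot)}\, h^{(1)}_1 + \cL_1 h^{(1)}_2 \big\rangle = 0, \qquad
\big\langle (\partial_t + \cL_{\bfmu}) h^{(0)}_1 + \tfrac{h^{(1)}_0 h^{(1)}_1}{2\,k(t,\cdot)} + \cL_1 h^{(0)}_2 \big\rangle = 0.
\end{align*}
From the proof of the previous proposition $h^{(0)}_0$, $h^{(1)}_0$, $h^{(0)}_1$, $h^{(1)}_1$ are independent of $y$, so the $(\partial_t+\cL_{\bfmu})$ terms average trivially and, using $\langle\eta\rangle=0$ and hence $\langle k(t,\cdot)^{-1}\rangle=\kappa(t)^{-1}$, so do the potential terms; the only real work is to evaluate $\langle\cL_1 h^{(1)}_2\rangle$ and $\langle\cL_1 h^{(0)}_2\rangle$.

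To that end I would first obtain $h^{(1)}_2$ and $h^{(0)}_2$ from the next-order Poisson equations \eqref{eq:h10order} and \eqref{eq:h00order}, in which $\cL_1 h^{(i)}_1=0$ because $h^{(i)}_1$ is $y$-independent. Subtracting the centering conditions already used in the previous proposition to pin down $h^{(1)}_0$ and $h^{(0)}_0$, and using $k(t,y)^{-1}-\kappa(t)^{-1}=\eta(y)/\kappa(t)$, these collapse to
\begin{align*}
\cL_0 h^{(1)}_2 = -\tfrac{\cX_0(t)}{\kappa(t)}\,h^{(1)}_0(t,\bfmu)\,\eta(y), \qquad
\cL_0 h^{(0)}_2 = -\tfrac{\big(h^{(1)}_0(t,\bfmu)\big)^2}{4\,\kappa(t)}\,\eta(y).
\end{align*}
With $\psi$ the centered solution of \eqref{eq:psi}, this yields $h^{(1)}_2 = -\tfrac{\cX_0}{\kappa}\,h^{(1)}_0\,\psi + r^{(1)}(t,\bfmu)$ and $h^{(0)}_2 = -\tfrac{(h^{(1)}_0)^2}{4\kappa}\,\psi + r^{(0)}(t,\bfmu)$, where the $y$-independent remainders $r^{(i)}$ are irrelevant since $\cL_1$ differentiates in $y$.

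Finally, applying $\cL_1$ — which on a product $A(t,\bfmu)\,\psi(y)$ acts as $\cL_1[A\psi]=\beta(y)\,\psi'(y)\sum_i (g_i(\bfmu)\cdot\bfrho)\,\partial_{\mu_i}A$ — and averaging so that the scalar $\langle\beta\psi'\rangle$ factors out, then recalling $\bfV=-\bfrho\,\langle\beta\psi'\rangle$, the definitions \eqref{eqn:F1}, \eqref{eqn:F0}, and $\partial_{\mu_i}(h^{(1)}_0)^2 = 2 h^{(1)}_0\,\partial_{\mu_i}h^{(1)}_0$ in the second case, one obtains $\langle\cL_1 h^{(1)}_2\rangle=\bfV\cdot F^{(1)}$ and $\langle\cL_1 h^{(0)}_2\rangle=\tfrac{h^{(1)}_0}{2\kappa}\,\bfV\cdot F^{(0)}$. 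Substituting into the two centering conditions gives exactly \eqref{eq:pde_h11} and \eqref{eq:pde_h01}.

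I do not anticipate a deep obstacle; this is the standard two-scale averaging computation in the style of \cite{multiscale_fouque_new_book}. The points that require care are (i) recognizing that $h^{(1)}_2$ and $h^{(0)}_2$ are determined only up to an additive $y$-independent function and that this ambiguity disappears under $\langle\cL_1\cdot\rangle$, and (ii) the scalar/vector bookkeeping distinguishing $g_i(\bfmu)\cdot\bfrho$ from $g_i(\bfmu)\,\partial_{\mu_i}h^{(1)}_0$ so that contraction against $\bfV$ reproduces the stated $F^{(i)}$; a fully rigorous treatment would additionally record the growth and non-degeneracy hypotheses on $\alpha,\beta,\eta$ and $g$ under which the Poisson equations are solvable with at-most-polynomial growth.
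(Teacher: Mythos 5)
Your proposal is correct and follows essentially the same route as the paper: solve the order-$\eps$ Poisson equations \eqref{eq:h10order}--\eqref{eq:h00order} for $h^{(1)}_2$ and $h^{(0)}_2$ up to a $y$-independent remainder, express the $y$-dependent part via the centered solution $\psi$ of $\cL_0\psi=\eta$, apply $\cL_1$, average against the invariant distribution so that $\langle\beta\psi'\rangle$ factors out into $\bfV$, and impose the centering conditions of \eqref{eq:h112order}--\eqref{eq:h012order}. You are, if anything, slightly more explicit than the paper in writing out those final centering conditions (the paper ends with ``the proposition follows''), and your sign and vector bookkeeping check out.
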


\begin{proof}

From the Poisson equations (\ref{eq:h10order}) and (\ref{eq:h00order}), and accounting for their centering constants, which may depend on time and trading signal, we find
\begin{align}
h^{(1)}_2(t,\bfmu,y) &= - (\cL_0^{-1} \eta(y)) \left(\frac{\cX_0(t)}{\kappa(t)}h^{(1)}_0(t,\bfmu) \right) + c^{(1)}(t,\bfmu),
\label{eqn:h21-inv-L0}
\\
h^{(0)}_2(t,\bfmu,y) &= - (\cL_0^{-1} \eta(y)) \left(\frac{1}{4\kappa(t)}(h^{(1)}_0(t,\bfmu))^2 \right) + c^{(0)}(t,\bfmu),
\label{eqn:h20-inv-L0}
\end{align}
for some functions $c^{(1)}$ and $c^{(0)}$ independent of $y$. Due to our standing assumption $\langle \eta \rangle = 0$, the Poisson equation (\ref{eq:psi}) is centered and we can use $\psi$ to simplify \eqref{eqn:h21-inv-L0} \eqref{eqn:h20-inv-L0}. Hence, we may write
\begin{align}
h^{(1)}_2(t,\bfmu,y) &= - \psi(y)\, \frac{\cX_0(t)}{\kappa(t)}h^{(1)}_0(t,\bfmu) + c^{(1)}(t,\bfmu),
\\
h^{(0)}_2(t,\bfmu,y) &= - \psi(y)\, \frac{1}{4\kappa(t)}(h^{(1)}_0(t,\bfmu))^2 + c^{(0)}(t,\bfmu).
\end{align}
Therefore,
\begin{align}
\cL_1 h^{(1)}_2(t,\bfmu,y) &= - \psi'(y)  \beta(y) \frac{\cX_0(t)}{\kappa(t)} \sum_{i=1}^k (g_i(\bfmu) \cdot \bfrho )\partial_{\mu_i} h^{(1)}_0(t,\bfmu),\\
\cL_1  h^{(0)}_2(t,\bfmu,y) &= -\psi'(y)  \beta(y) \frac{h^{(1)}_0(t,\bfmu)}{2\kappa(t)}  \sum_{i=1}^k (g_i(\bfmu) \cdot \bfrho )\partial_{\mu_i} h^{(1)}_0(t,\bfmu),
\end{align}
and the proposition follows.
\end{proof}

The solution to these PDEs (\ref{eq:pde_h11}) and (\ref{eq:pde_h01}) are provided in the following proposition. Its proof is a straightforward application of the Feynman-Kac representation.
\begin{proposition}\label{prop:G1}
The unique solutions to (\ref{eq:pde_h01}) and (\ref{eq:pde_h11}) are
\small
\begin{align}
\label{eqn:h1-phi1}
h^{(1)}_1(t,\bfmu) &:= \bfV \cdot \phi_1(t,\bfmu), \quad \text{and}
\\
h^{(0)}_1(t,\bfmu) &:= \bfV \cdot\int_t^T   \bE\left[\ \left. \tfrac{h^{(1)}_0(s,\bfmu_s)}{2\kappa(s)}\left(\phi_1(s,\bfmu_s) + F^{(0)}(t,\bfmu_s)\right)   \ \right| \ \bfmu_t = \bfmu \ \right]ds,
\end{align}
\normalsize
where
\begin{align}\label{eq:phi1}
\phi_1(t,\bfmu) = \int_t^T e^{\int_t^s \frac{\cX_0(u)}{\kappa(u)} du}\; \bE[\ F^{(1)}(s,\bfmu_s) \  | \ \bfmu_t = \bfmu\ ]\ ds,
\end{align}
\seb{
where $F^{(0)}$ and $F^{(1)}$ are defined in \eqref{eqn:F0} and \eqref{eqn:F1}, respectively.
}
\end{proposition}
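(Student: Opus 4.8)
The plan is to verify that the proposed functions solve the two linear PDEs \eqref{eq:pde_h11} and \eqref{eq:pde_h01} with zero terminal condition, and that uniqueness follows from standard arguments; this is essentially an exercise in the Feynman--Kac formula once the structure is recognized. First I would treat \eqref{eq:pde_h11}. Since $\bfV$ is a constant vector and the equation is linear, writing $h^{(1)}_1 = \bfV\cdot\phi_1$ reduces the problem (componentwise) to showing that $\phi_1$ solves
\begin{align*}
(\partial_t + \cL_{\bfmu})\phi_1 + \frac{\cX_0(t)}{\kappa(t)}\phi_1 + F^{(1)}(t,\bfmu) = 0, \qquad \phi_1(T,\bfmu) = 0.
\end{align*}
This is a linear parabolic PDE with a zeroth-order potential term $\cX_0(t)/\kappa(t)$ (depending only on $t$) and an inhomogeneous source $F^{(1)}$. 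The Feynman--Kac representation then gives exactly \eqref{eq:phi1}: propagate $\bfmu$ under its own generator $\cL_{\bfmu}$, integrate the source $F^{(1)}(s,\bfmu_s)$ forward in $s$, and weight by the exponential discount factor $\exp(\int_t^s \tfrac{\cX_0(u)}{\kappa(u)}\,du)$ coming from the potential. The fact that the potential depends only on time is what lets the exponential factor come outside the conditional expectation in \eqref{eq:phi1}.

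Next I would handle \eqref{eq:pde_h01}. Again $\bfV$ is a constant vector, so linearity lets me pull it out. The equation for $h^{(0)}_1/\bfV$ (abusing notation for the vector-valued object) has no zeroth-order term and an inhomogeneous source $\frac{h^{(1)}_0}{2\kappa}\big(\phi_1 + F^{(0)}\big)$, which at this stage is a known function of $(t,\bfmu)$ since $h^{(1)}_0$, $\phi_1$, $F^{(0)}$ have all been determined. Hence the Feynman--Kac representation is simply the time-integral of the conditional expectation of that source along paths of $\bfmu$, which is precisely the stated formula. The terminal condition $h^{(0)}_1(T,\bfmu)=0$ is immediate because the integral $\int_t^T$ is empty at $t=T$; likewise for $\phi_1$ and $h^{(1)}_1$.

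For uniqueness, I would invoke a standard verification/uniqueness result for linear parabolic PDEs: under the running assumptions on $c$ and $g$ (so that $\cL_{\bfmu}$ generates a well-posed diffusion) together with the integrability of $\bfmu$ implicit in the admissibility class and the boundedness of $\cX_0(t)/\kappa(t)$ on $[0,T]$, any solution of the linear Cauchy problem in a suitable polynomial-growth class coincides with the Feynman--Kac representation. The main obstacle --- really the only non-routine point --- is making sure the Feynman--Kac representation is rigorously justified: one needs enough regularity and growth control on $h^{(1)}_0$, $F^{(0)}$, $F^{(1)}$ (which themselves are expectations of $\bfgamma\cdot\bfmu_s$ and its quadratic combinations against smooth coefficients $g_i$) so that the relevant stochastic integrals are true martingales and the expectations are finite; given the linear-growth-type structure of the trading-signal model this is expected to hold, but it is the step where care is needed rather than the algebraic verification, which is immediate.
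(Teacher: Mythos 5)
Your proposal is correct and follows essentially the same route as the paper, which simply notes that the proposition is ``a straightforward application of the Feynman--Kac representation'' applied to the linear Cauchy problems \eqref{eq:pde_h11} and \eqref{eq:pde_h01}, with the time-dependent potential $\cX_0(t)/\kappa(t)$ producing the exponential weight in \eqref{eq:phi1} and the absence of a potential in \eqref{eq:pde_h01} giving the plain integral representation. Your added remarks on pulling out the constant vector $\bfV$ by linearity and on the regularity and growth conditions needed to justify the representation and uniqueness are consistent with (and slightly more explicit than) what the paper records.
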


% \begin{proof}
% See Appendix \ref{app:proofs_h1}.
% \end{proof}

\subsection*{First-order approximation}

Putting all these results together, we can find the first-order approximation for $H^\eps$ is given by
\begin{align} \label{eq:first_order_approx}
H^\eps_1(t,x,S,\bfmu,q) = x + q\,S + h(t,\bfmu,q) + \bfV^\eps \cdot h_1(t,\bfmu,q),
\end{align}
where $\bfV^\eps = \sqrt{\eps}\;\bfV$, $h_1(t,\bfmu,q) = h^{(1)}_1(t,\bfmu)\,q + h^{(0)}_1(t,\bfmu)$ and $h(t,\bfmu,q)$ is given by Equation \eqref{eq:h_deterministic} with $\mathsf{k} = \kappa$.

\subsection{Approximation for the optimal feedback control}

\yuri{Under the deterministic temporary price impact, there are two classical trading strategies. Firstly, the approach outlined by Almgren-Chriss (AC), where no trading signal is considered, in our setting becomes
\begin{align}
\nu^{AC}(t,q) = \frac{\cX_0(t)}{\kappa(t)}q.
\end{align}
Including trading signals (TS), we are in the setting of Section \ref{sec:deterministic_k} and we define
\begin{align}
\nu^{TS}(t,\bfmu,q) = \frac{1}{\kappa(t)}\left(\cX_0(t)\, q + \frac{1}{2} h^{(1)}_0(t,\bfmu)\right) = \nu^{AC}(t,q) + \frac{1}{2\kappa(t)} h^{(1)}_0(t,\bfmu) . \label{eq:v_0}
\end{align}
We refer to $\nu^{AC}$ and $\nu^{TS}$ by AC and TS optimal controls, respectively.

Let us now consider the connection of these formulas with the one found in our full model. Recall that the optimal feedback control is
\begin{align}
\nu_*^\eps(t,\bfmu,q,y) = \tfrac{1}{2\,k(t,y)} (b\,q + \partial_q h^\eps) = \tfrac{1}{2\kappa(t)} (1 + \eta(y)) \left(bq + \partial_q h^\eps \right)\,.
\end{align}
Using the expansion derived for $h^\eps$, we find, formally,
\begin{subequations}
\begin{align}
\nu_*^\eps(t,\bfmu,q,y) &= \frac{1}{2\,k(t,y)} \left(b\,q + h^{(1),\eps} + 2\,q \,h^{(2),\eps} \right)
\\
&= \frac{1}{2\,k(t,y)} \left(2 \,\cX^\eps\, q + h^{(1),\eps} \right) = \frac{\cX^\eps}{k(t,y)}\,q  + \frac{1}{2\,k(t,y)}\, h^{(1),\eps}
\\
&= \frac{\cX_0(t)}{k(t,y)}\,q + \frac{1}{2\,k(t,y)} \left(h^{(1)}_0(t,\bfmu) + \sqrt{\eps} \,h^{(1)}_1(t,\bfmu)\right) + \cdots
\\
&= \frac{\cX_0(t)}{k(t,y)}\,q  +\frac{1}{2k(t,y)} \,h^{(1)}_0(t,\bfmu) + \frac{1}{2\,k(t,y)}\, \sqrt{\eps}\, h^{(1)}_1(t,\bfmu) + \cdots.
\end{align}%
\end{subequations}%
% For later simplification, we define
% \begin{equation}
% v_0(t,\bfmu,q) = \frac{1}{\kappa(t)}\left(\cX_0(t)\, q + \frac{1}{2} h^{(1)}_0(t,\bfmu)\right) \label{eq:v_0}
% \end{equation}
% which represents the optimal control of model \eqref{eq:model_fast} assuming $k(t,\cdot) \equiv \kappa(t)$.
The first-order approximation of the optimal control is then given by the expression
\begin{subequations}
\begin{align}
\nu_1^\eps(t,\bfmu,q,y) &= (1 + \eta(y))\; \left(\nu^{TS}(t,\bfmu,q) + \sqrt{\eps}\; \frac{1}{2\kappa(t)}\,h^{(1)}_1(t,\bfmu) \right)\\
&= (1 + \eta(y))\; \left(\nu^{TS}(t,\bfmu, q) + \bfV^\eps \cdot C_1(t,\bfmu) \right)\label{eqn: v1}
\end{align}
\end{subequations}
where
% \begin{subequations}
% \label{eqn: v1}
% \begin{align}
% v_1^\eps(t,\bfmu,q) &= v^{TS}(t,\bfmu,q) + \sqrt{\eps}\; \frac{1}{2\kappa(t)}\,h^{(1)}_1(t,\bfmu) \\
% &= v^{TS}(t,\bfmu, q) + \bfV^\eps \cdot C_1(t,\bfmu), \label{eqn:C1-defn}
% \end{align}%
% \end{subequations}%
% and
\begin{align}\label{eqn: C1}
C_1(t,\bfmu) = \frac{1}{2\kappa(t)} \phi_1(t,\bfmu),
\end{align}
with $\phi_1$ given in \eqref{eq:phi1}. Thus, the correction term is proportional to the function $\phi_1$, which in turn is the expected weighted average of the future sensitivity of the zero-order value to the trading signal, see Equations \eqref{eq:phi1}, \eqref{eqn:F1} and \eqref{eq:h1cj}.

Notice that, since the process $Y$ is observable in our setting, we are able to consider the first-order control $\nu_1^\eps$ in practice. In the numerical examples below, we will compare $\nu^{AC}$, $\nu^{TS}$ and $\nu_1^\eps$. Note that $(1 + \eta(y))\nu^{TS}$ becomes the zeroth-order approximation. %Moreover, notice that we could also consider the correction $1 + \eta(y)$ for the strategies $\nu^{AC}$ and $\nu^{TS}$. Indeed,
\begin{remark}
If one computes the first-order approximation of the solution to \textit{linear} PDE that arises from PDE \eqref{eq:pdf_h_eps} but taking $v = \nu_0(t,\boldsymbol{\mu},q,y) = (1 + \eta(y))\nu^{TS}(t,\boldsymbol{\mu},q)$ instead of the supremum, then one could show that this is also a first-order approximation of $h^\varepsilon$.
\end{remark}
}

\section{A Mean-Reverting Trading Signal Example}
\label{sec:MR-trading-signal}

In this section, we  use a specific model to show how the results of the previous section may be applied in practice. To completely specify the model, we must determine the dynamics of $\bfmu$ and $Y^\eps$. For the latter, we choose $\alpha(y) = \theta - y$ and $\beta(y) := \sqrt{2}\,\beta$ (for a constant $\beta$). Thus, $Y$'s stationary  distribution is Gaussian with mean $\theta$ and variance $\beta^2$. Setting  $\eta(y) = y$, we must have $\theta = 0$ to guarantee $\langle \eta \rangle = 0$. In this case, we find $\psi(y) = -y$ (which solves \eqref{eq:psi}), and hence
$$
\bfV^\eps = \sqrt{2\,\eps}\,\beta\,\bfrho.
$$
We model the trading signal $\bfmu$ as a multidimensional Ornstein-Uhlenbeck (OU) process
$$d\bfmu_t = (A\,\bfmu_t + \bar{\bfmu}) \,dt + B\, d\bfW_t,$$
where $A, B \in \bR^{k \times k}$ and $\bar{\bfmu} \in \bR^k$.
%In the one-dimensional case, we will write $A = -\kappa$, $\bar{\bfmu} = -\theta/\kappa$ and $B = b$.
Straightforward computations  show that, for $s \geq t$,
$$\bfmu_s = e^{A(s-t)}\, \bfmu_t + \int_t^s e^{A(s-u)}\, du \ \bar{\bfmu}  + \int_t^s e^{A(s-u)}\, B\, d\bfW_u.$$
Then
$$\bE[\ \bfmu_s \ | \ \bfmu_t = \bfmu  \ ]
= e^{A(s-t)} \ \bfmu + \int_t^s e^{A(s-u)} \ du \ \bar{\bfmu},$$
and from \eqref{eq:h1cj}, we may write (for the case of deterministic impact)
\begin{align}
\label{eqn: hcj-for-mr}
h^{(1)}(t,\bfmu) &= \bfgamma \cdot \left( \int_t^T e^{\int_t^s \frac{\cX(u)}{k(u)} du} \left( e^{A(s-t)} \bfmu + \int_t^s e^{A(s-u)} du \ \bar{\bfmu} \right) ds\right)\\
&= \bfgamma \cdot\left(\Phi_1(t)\bfmu + \Phi_0(t) \bar{\bfmu}\right),
\end{align}
where
\begin{equation}
\Phi_1(t) = \int_t^T e^{\int_t^s \frac{\cX(u)}{k(u)} du} e^{A(s-t)}\, ds \quad \mbox{ and } \quad  \Phi_0(t) = \int_t^T e^{\int_t^s \frac{\cX(u)}{k(u)} du} \int_t^s e^{A(s-u)} \,du \,ds.
\label{eqn:Phi0-Phi1-defn}
\end{equation}
% \seb{
% do you think we should just compute the inner integral and write instead
% $\int_t^T e^{\int_t^s \frac{\cX(u)}{k(u)} du} A^{-1}(e^{A(s-t)}-\mathbb{I})\,ds$
% ?
% }
If $A$ is invertible, then we may write $\Phi_0$ in terms of $\Phi_1$ using
$$
\int_t^s e^{A(s-u)} du = (e^{A(s-t)} - I)\, A^{-1}
$$
which implies
\begin{align*}
\Phi_0(t) &= \int_t^T e^{\int_t^s \frac{\cX(u)}{k(u)} du} (e^{A(s-t)} - I) A^{-1} ds
= \Phi_1(t) - \int_t^T e^{\int_t^s \frac{\cX(u)}{k(u)} du} ds A^{-1}.
\end{align*}

Using the representation in \eqref{eqn: hcj-for-mr}, the explicit formula for the \yuri{TS} optimal trading speed is
\begin{align}
\label{eqn:v0-explicit-mr}
\nu^{TS}(t,\bfmu,q) &= -\frac{\cX(t)}{\kappa(t)}  q - \frac{1}{2\kappa(t)} \bfgamma \cdot (\Phi_1(t)\, \bfmu + \Phi_0(t) \,\bar{\bfmu}).
\end{align}

Next,  we analyze $\phi_1$ (see \eqref{eq:phi1}) to obtain the first-order correction. First note that
$$
F^{(1)}(t,\bfmu) = \frac{\cX(t)}{\kappa(t)} \sum_{i=1}^k \bfb_i  \partial_{\mu_i} h^{(1)}(t,\bfmu),
$$
where $\bfb_i$ is the $i$-th row of $B$. Moreover,
$$\partial_{\mu_i}h^{(1)}(t,\bfmu) = \sum_{j=1}^k \gamma_j \Phi_1(t)_{ji} = \bfgamma \cdot \Phi_{1,i}(t),$$
where $\Phi_{1,i}(t)$ is the $i$-th column of $\Phi_1(t)$. Thus
$$F_1(t,\bfmu) = \frac{\cX(t)}{\kappa(t)} \sum_{i=1}^k (\bfgamma \cdot \Phi_{1,i}(t)) \bfb_i,$$
which is independent of $\bfmu$. Therefore, $\phi_1$ is independent of $\bfmu$ and
\begin{align*}
\phi_1(t) &= \int_t^T e^{\int_t^s \frac{\cX(u)}{\kappa(u)} du} F_1(s) \,ds
\\
&= \int_t^T e^{\int_t^s \frac{\cX(u)}{\kappa(u)} du} \,\frac{\cX(s)}{\kappa(s)}\,
\sum_{i=1}^k (\bfgamma \cdot \Phi_{1,i}(s)) \,\bfb_i \,ds
= \sum_{i=1}^k (\bfgamma \cdot \Phi_{2,i}(t)) b_i,
\end{align*}
where
$$
\Phi_{2,ji}(t) = \int_t^T e^{\int_t^s \frac{\cX(u)}{\kappa(u)} du}\, \frac{\cX(s)}{\kappa(s)} \,\Phi_{1,ji}(s)\, ds.$$
Hence, $C_1$ is independent of $\bfmu$ and
$$
C_1(t) = \frac{1}{2 \kappa(t)}\, \phi_1(t) = \frac{1}{2 \kappa(t)} \sum_{i=1}^k \bfb_i \,(\bfgamma \cdot \Phi_{2,i}(t)).
$$

Inserting these results into \eqref{eqn: v1} and \eqref{eqn:v0-explicit-mr}, provides the first-order approximation for the optimal strategy as%\seb{should we not sub in \eqref{eqn: hcj-for-mr} for $\hcj$?}
\begin{align*}
\nu_1^\eps(t,\bfmu,q) &= \frac{\cX(t)}{\kappa(t)}  q + \frac{1}{2\kappa(t)}\, \bfgamma \cdot (\Phi_1(t)\,\bfmu + \Phi_0(t)\, \bar{\bfmu})
%\\
%&
+ \frac{1}{2 \kappa(t)} \sum_{i=1}^k (\bfV^\eps \cdot \bfb_i)\, (\bfgamma \cdot \Phi_{2,i}(t))\,.
\end{align*}

\section{Real Data Example}
\label{sec:real-data}

In this section, we  present an application of our model  with real data from the Microsoft stock \yuri{on May 24th, 2014.}

The impact parameter is estimated using the cost of walking the limit order book by `fictitious orders' of various volumes, and regressing this cost/volume curve to obtain a linear approximation, \yuri{see \cite{CarteaJaimungal2015} for a more detailed description of the procedure.} The slope is our estimate of $\kappa_t = k(t,Y_t^\eps)$, and we obtain this estimate every second of the day, resulting in a sample path as shown in \cref{fig:kappa_fit}. To obtain estimates for $\eta_t = \eta(Y_t^\eps)$, we perform the following steps:
\begin{enumerate}

    \item Project the sample path of $\kappa_t$  onto a polynomial basis of order $J=8$, i.e.  find
    \[
    \boldsymbol{\alpha}^\star:=\arg\min_{\boldsymbol{\alpha}} \textstyle\sum_{i=1}^{N} \left(\kappa_{t_i}-\sum_{j=1}^J \alpha_j \,{t_i}^{j-1} \right)^2,
    \]
    where $N=23400$ (number of seconds in a trading day), $t_i=i/N$.

    \item Adjust the coefficients to ensure that  $\langle\eta_t\rangle=0$. We do this by making the empirical mean of the implied $\eta_t$ process over the data as close to zero as possible
    \[
    \boldsymbol{\alpha}^*:=\arg\min_{\boldsymbol{\alpha}} \tfrac{1}{N}\textstyle\sum_{i=1}^{N} \left(\frac{\sum_{j=1}^J \alpha_j {t_i}^{j-1}}{\kappa_{t_i}} - 1\right)
    \]
    where the initial starting point for the minimizer is $\boldsymbol{\alpha}^\star$ from step 1.

    \item Next, we assume $\eta(y) = y$ and set $ Y_t^\varepsilon = \eta_t =\frac{\sum_{j=1}^J \alpha_j {t}^{j-1}}{\kappa_{t}} - 1 $, and assume $Y_t^\varepsilon$ is an OU process and hence satisfies
    \[
    dY_t^\varepsilon = -\tfrac{1}{\varepsilon}Y_t^\varepsilon\,dt + \tfrac{1}{\sqrt{\varepsilon}}\,\beta\,dW_t^*\,.
    \]

    \item Finally, we estimate the parameters $\varepsilon$ and $\beta$ by regressing $(\eta_{t_{i+1}}-\eta_{t_i})$ onto $\eta_{t_i}$.

\end{enumerate}

\seb{The results of the estimation are provided in Tables \ref{tab:estimated_param} and \ref{tab:exo_param}. The trading signal parameters in the left portion of Table \ref{tab:exo_param} are similar to those used in \cite{lehalle2019incorporating}.} The $\eta_t$ sample path, together with scatter plot of $\eta_t$ and $\eta_{t-1}$ and histogram of residuals are shown in Figure \ref{fig:eta}. \seb{The estimate of $\eps$ we perform is equivalent to the variogram approach espoused in, e.g., \cite{fouque2000derivatives}. Moreover, as the estimated $\eps\ll1$, we conclude there is a fast mean-reverting factor at play here.}

\begin{table}[h!]
\color{black}
\footnotesize
    \centering
    \begin{tabular}[t]{lc}
    Param. & Value \\
    \hline
    $\eps$  & 0.0035 \\
            &  (0.0030, 0.0043) \\
    $\beta$ & 0.26984 \\
            & (0.26981, 0.26986) \\
    \hline
    \end{tabular}
    \caption{\seb{Estimated parameters using the procedure described above. The confidence intervals are reported at the 95\% level, and for $\eps$ it is computed using the confidence interval of the slope of the regression in step 4 and the one for $\beta$ is computed using the Delta method.}}
    \label{tab:estimated_param}
\end{table}

\begin{table}[h!]
\color{black}
\footnotesize
    \centering
    \begin{tabular}[t]{lc}
    Param. & Value \\
    \hline
    $A$ & -10 \\
    $\bar{\mu}$ & 0 \\
    $B$ & 1 \\
    $\mu_0$ & 1 \\
    $S_0$ & 100\\
    \hline
    \end{tabular}
    \quad
        \begin{tabular}[t]{lc}
    Param. & Value \\
    \hline
    $\sigma$ & 0.1\\
    $\rho$ & -0.5 \\
    $\gamma$ & 0.1 \\
    $V_1^\eps$ & -0.008 \\
    $b$ & 1.4275e-06 \\
    \hline
    \end{tabular}
    \caption{\seb{Typical values for the parameters that conclude the dynamics of $(S,\mu,Y)$. The parameter $\rho$ is the correlation between the trading signal and the fast mean-reverting process and $V_1^\eps = \sqrt{2\eps}\beta \rho$. Moreover, the value of $b$ is estimated using the procedure in \cite{CarteaJaimungal2015} on MSFT data for the calendar year 2014.}}
    \label{tab:exo_param}
\end{table}

\begin{table}[h!]
\color{black}
\footnotesize
    \centering
    \begin{tabular}[t]{lc}
    Param. & Value \\
    \hline
    $X_0$ & 0 \\
    $Q_0$ & $10^4$ \\
    $\phi$ & $b$ \\
    $\varphi$ & $10^3 \cdot b$ \\
    \hline
    \end{tabular}
    \caption{\seb{Typical values for the parameters for the agent's criteria.}}
    \label{tab:agent_param}
\end{table}

\begin{figure}[t]
    \centering
    \hfill
    \includegraphics[width=.32\textwidth]{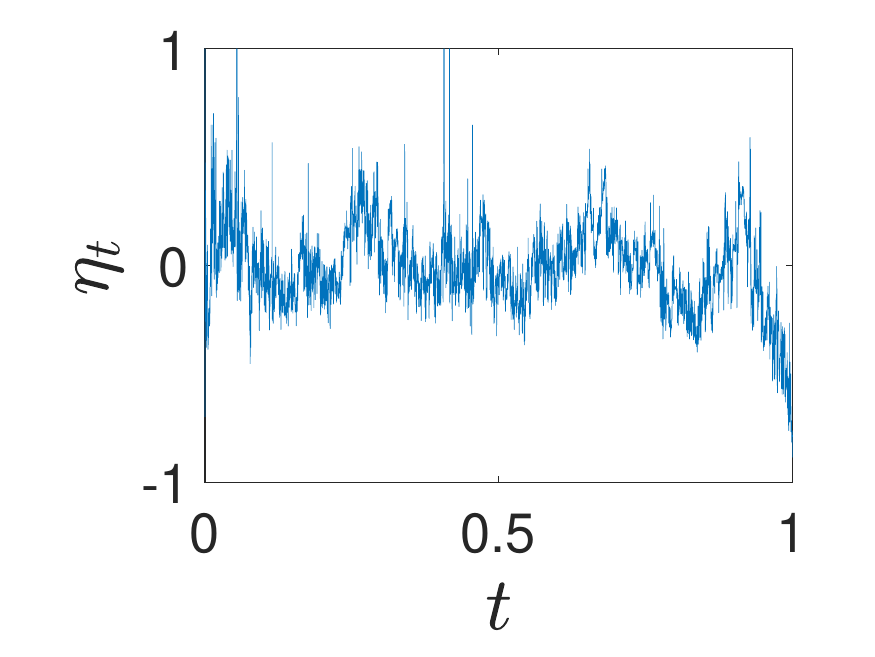}
    \hfill
    \includegraphics[width=.32\textwidth]{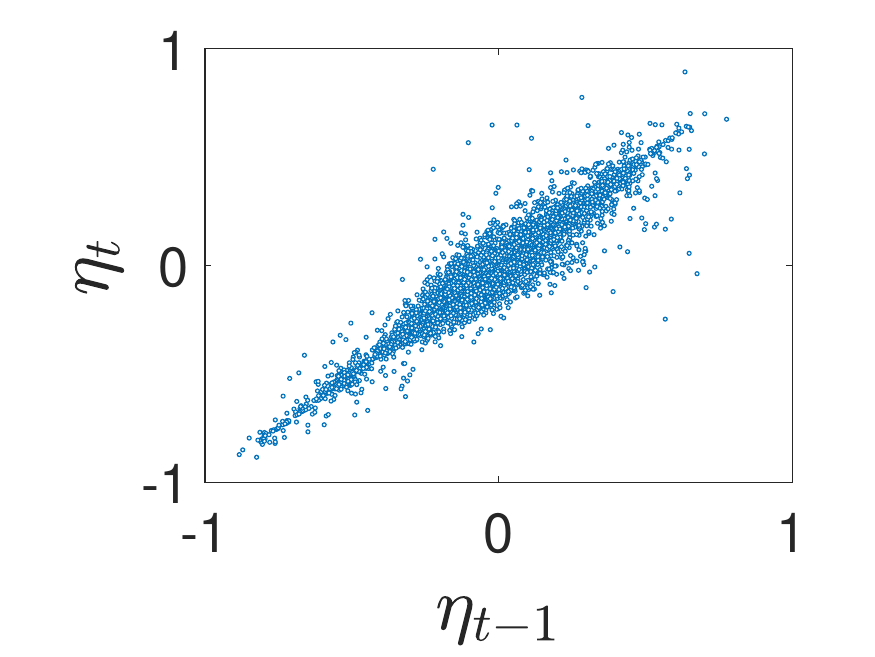}
    \hfill
    \includegraphics[width=.32\textwidth]{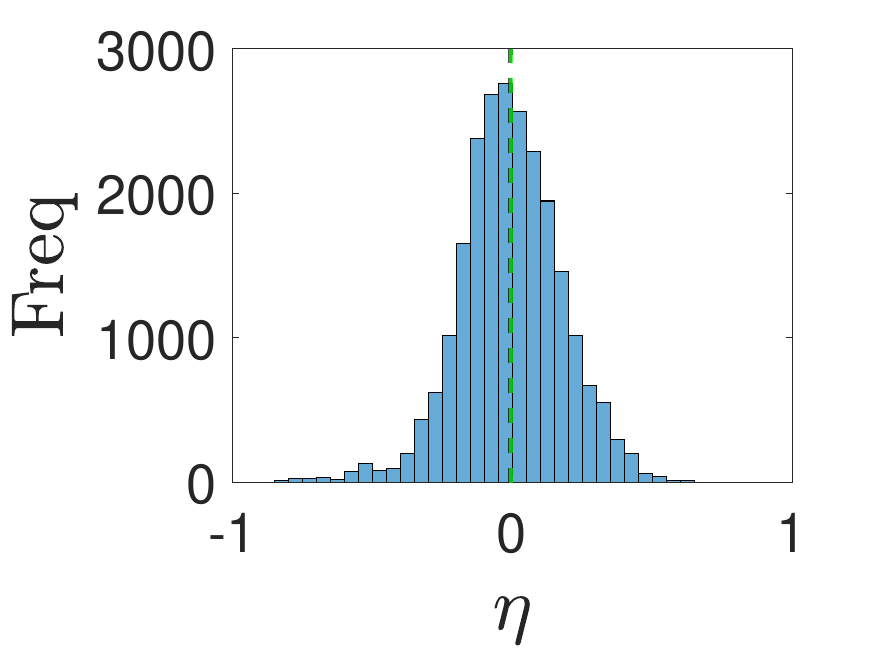}
    \hfill
    \caption{Estimation of the process $Y^\eps$ and of the stationary distribution.}
    \label{fig:eta}
\end{figure}

Figure \ref{fig:controls} illustrates the key functions ($\cX(t)$ which solves the ODE \eqref{eqn:Ricatti-deterministic}, $\Phi_0$ and $\Phi_1$ given in \eqref{eqn:Phi0-Phi1-defn}, and $C_1$ defined in \eqref{eqn: C1}) that feed into \yuri{TS (trading signals)} and first-order control strategies, $\nu^{TS}$ and $(1 + \eta(y))\nu_1^\eps$, using the estimated parameters. The figure shows the results for three different running penalty parameters $\phi=b$, $5\,b$, and $10\,b$. \seb{The reason we choose the urgency parameter to be of the same order of magnitude as $b$ is because (i) the solution to \eqref{eqn:Ricatti-deterministic} when $\mathsf{k}(t)$ is constant, depends on the ratio of the urgency parameter to the temporary impact $\frac{\phi}{\mathsf{k}}$; and (ii) as shown in \cite{cartea2017algorithmic}, the temporary and permanent impact are of similar order of magnitude. Thus, it is this ratio that modulates the agent's urgency, rather than the absolute value of $\phi$ alone.} As $\phi$ increases, we see that $\cX(t)$ generally increases in magnitude (more negative), $\Phi_0$ and $\Phi_1$ decreases, while $C_1$ behaves non-monotonically.
\begin{figure}[h!]
    \centering
    \includegraphics[width=0.7\textwidth]{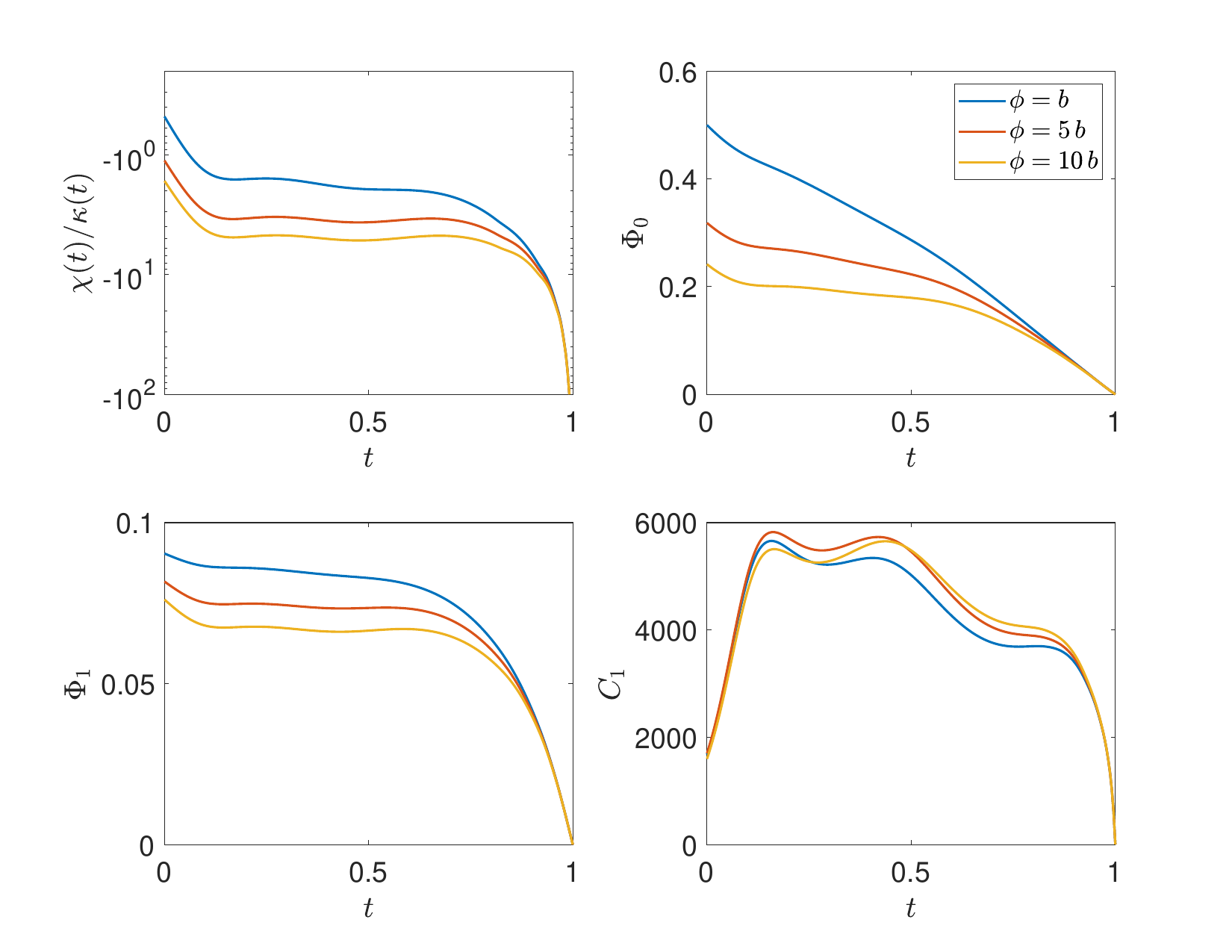}
    \caption{Plot of $\cX/\kappa$, $\Phi_0$, $\Phi_1$ and $C_1$ that define the control strategies $\nu^{TS}$ and $\nu_1^\eps$ for various levels of $\phi$.}
    \label{fig:controls}
\end{figure}

Figure \ref{fig:paths} shows a sample path of the zero-order and first-order optimal controls for various running penalties. The figure shows that the starting level of trading speed increases with $\phi$, as the trader is more urgent to rid themselves of shares, but larger $\phi$ levels induce the trader to slow down sooner in the day, and ends trading more slowly.  As well, and as expected, the figure shows that responding to the stochastic impact on average traces out the \yuri{TS} path.
\begin{figure}[h!]
    \centering
    \begin{minipage}{0.32\textwidth}
    \centering
    \tiny \textbf{$\phi=b$}\par
    \includegraphics[width=\textwidth]{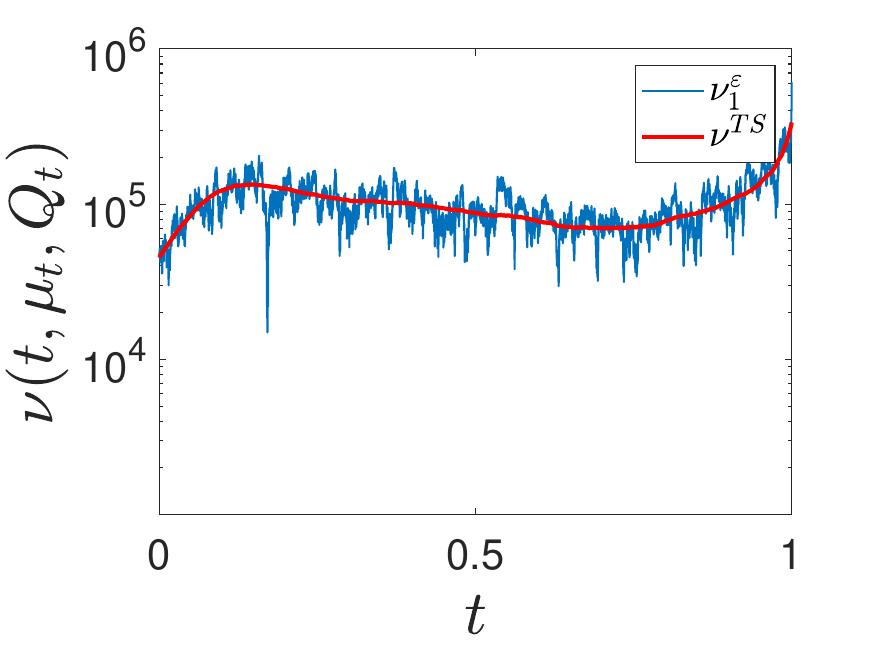}
    \end{minipage}
    \begin{minipage}{0.32\textwidth}
    \centering
    \tiny
    \textbf{$\phi=5\,b$}\par
    \includegraphics[width=\textwidth]{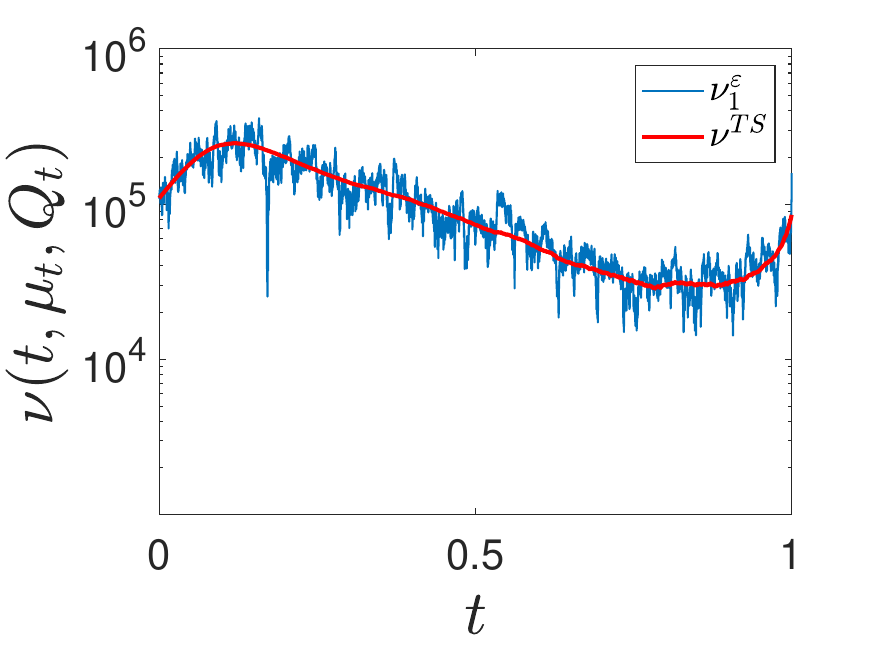}
    \end{minipage}
    \begin{minipage}{0.32\textwidth}
    \centering
    \tiny
    \textbf{$\phi=10\,b$}\par
    \includegraphics[width=\textwidth]{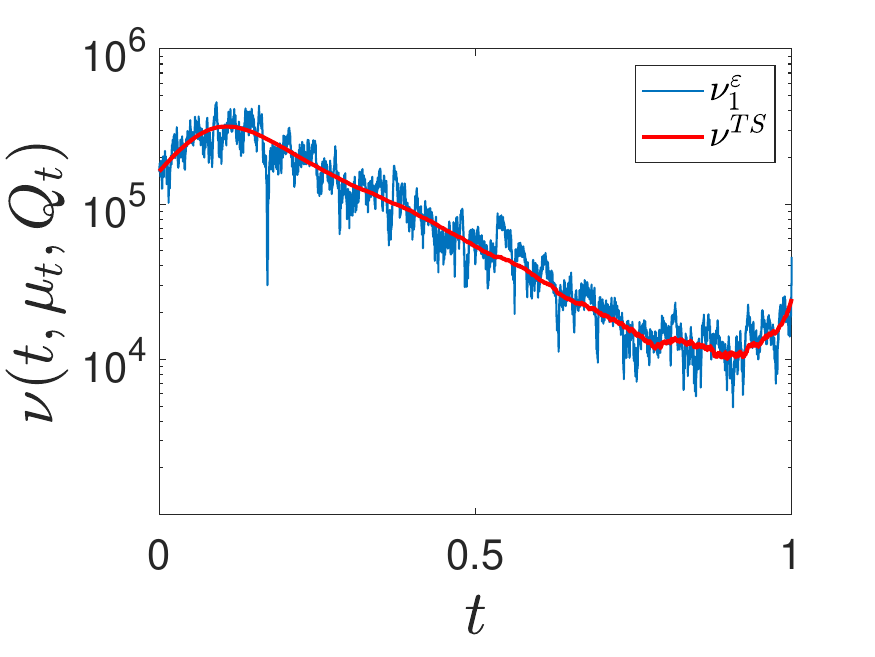}
    \end{minipage}
    \caption{A sample path of the \yuri{TS} (red) and first-order (blue) optimal controls, with $\phi=b,\;5\,b\;,10\,b$.}
    \label{fig:paths}
\end{figure}

Next, we simulate 10,000 sample paths of the \yuri{TS} and first-order optimal controls, and the corresponding inventory paths. Figure \ref{fig:Q-compare-paths} shows how the first-order optimal control's inventory paths differs from (left) the AC paths -- defined as the path with trading speed $\nu^{AC}$ -- and (right) the zero-order paths. The figure shows a single sample path (generated with the same set of random numbers, although they do produce different impacts due to differing trading speeds), together with the 10\%, 50\%, and 90\% quantiles across all 10,000 sample paths. As the figure shows, for the median path, there is a slight slow down relative to the AC path (since the median has an upward bump). As urgency ($\phi$) increases, the deviations develop a hump just after the start of trading and towards the end of trading the deviations are pinned more closely to the AC solution. The deviation from the \yuri{TS} strategy have a median path of essentially zero, and, as expected, the variation around the zero-order path is smaller than around the AC path. There is, however, still a significant amount of deviation due to the first-order strategy correctly adjusting to the stochastic impact.
\begin{figure}
\begin{minipage}[t]{0.6\textwidth}
    \centering
    {\small $\phi=b$}
    \par
    \includegraphics[width=\textwidth]{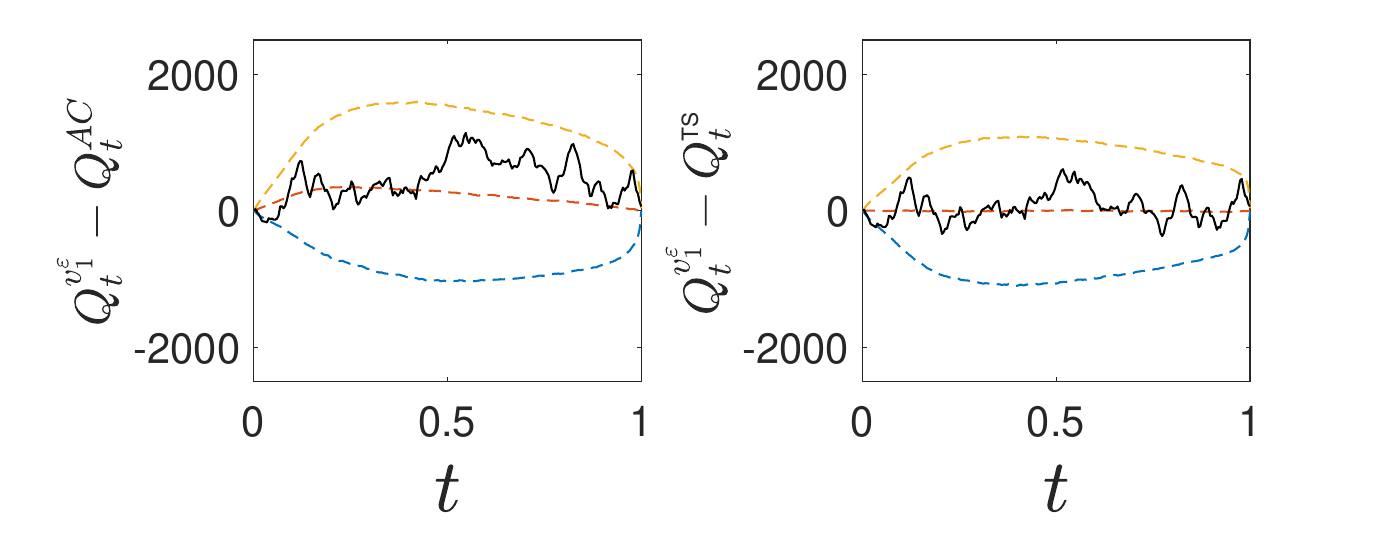}
    \\
    {\small $\phi=5\,b$}
    \par
    \includegraphics[width=\textwidth]{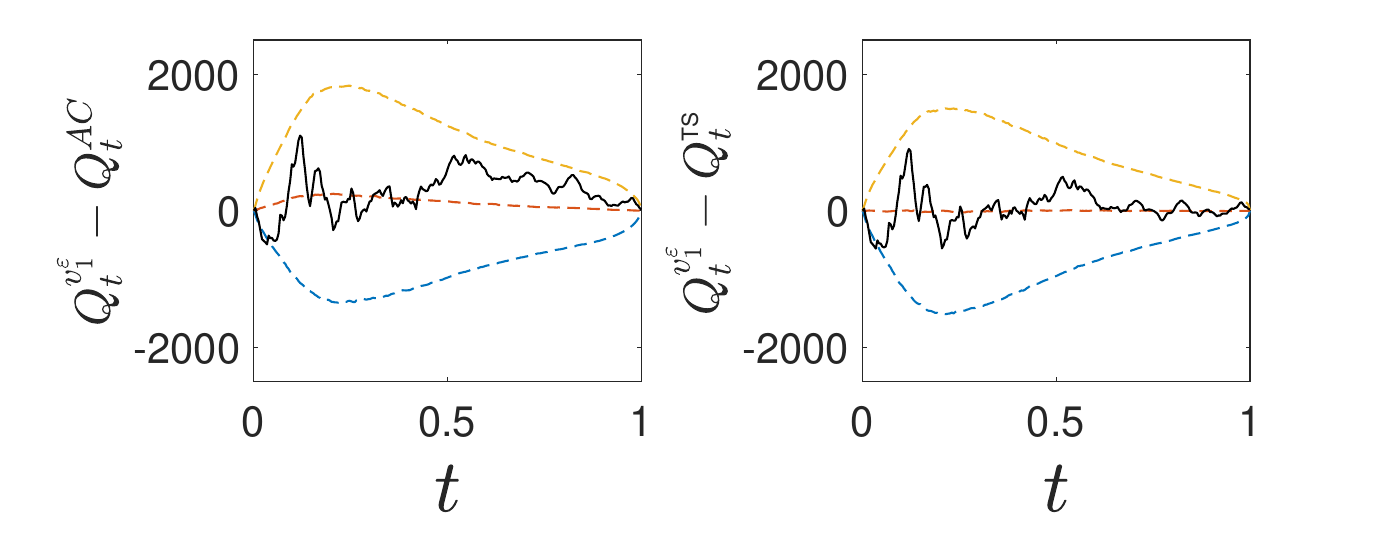}
    \\
    {\small $\phi=10\,b$}
    \par
    \includegraphics[width=\textwidth]{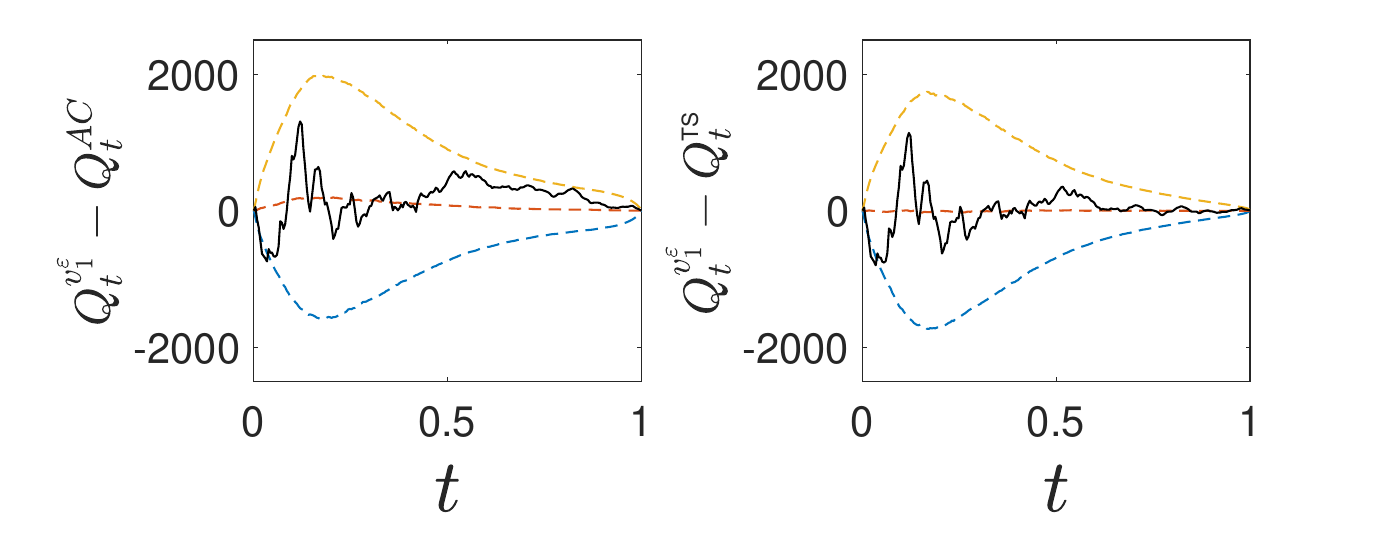}
\end{minipage}
\begin{minipage}[t]{0.35\textwidth}
    \caption{A sample path and the 10\%, 50\%, and 90\% quantile of (left) the deviation of the inventory following the first-order optimal and the Almgren-Chriss controls, and (right) the deviation of the inventory following the first-order optimal and the \yuri{TS} optimal controls.}
    \label{fig:Q-compare-paths}
\end{minipage}
\end{figure}

As a final numerical comparison, we investigate the cost savings that the first-order strategy provides over the AC and the zero-order strategy. For this purpose we compute the cost $C_T^\nu = X_T^\nu + Q_T^\nu S_T^\nu$ associated with a strategy $\nu$ and compute the savings in basis points relative to a benchmark $\bar{\nu}$ (with $\bar{\nu}$ either given by $\nu^{AC}$ or $\nu^{TS}$) as
\begin{equation}
    \text{basis points savings} = \frac{C_T^\nu - C_T^{\bar{\nu}}}{C_T^{\bar{\nu}}}\times 10^{4}\,.
\end{equation}
The results for various levels of urgency are shown in Figure \ref{fig:basis_pt}. The results show the  as we increase the urgency we on average perform better than both the AC and zero-order strategy. \seb{ It is worth mentioning that, while our optimisation problem includes the urgency penalty, the urgency penalty is effectively a form of risk control for the agent, and hence when comparing the results of a particular optimal strategy, it makes economic sense to compare the `real' costs associated with the strategy -- which excludes the urgency penalty -- and that is what Figure \ref{fig:basis_pt} compares.
}
%\seb{Add in some comments that while we optimise with the urgency penality, it makes economic sense to compare costs savings across various strategies.}

\begin{figure}[!h]
    \centering
    \includegraphics[width=.45\textwidth]{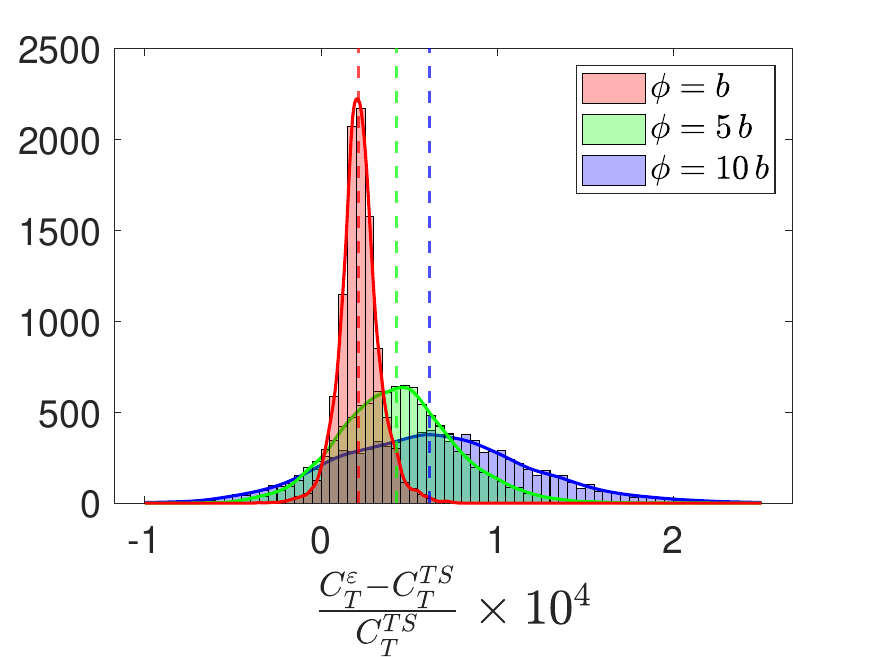}
    \quad
    \includegraphics[width=.45\textwidth]{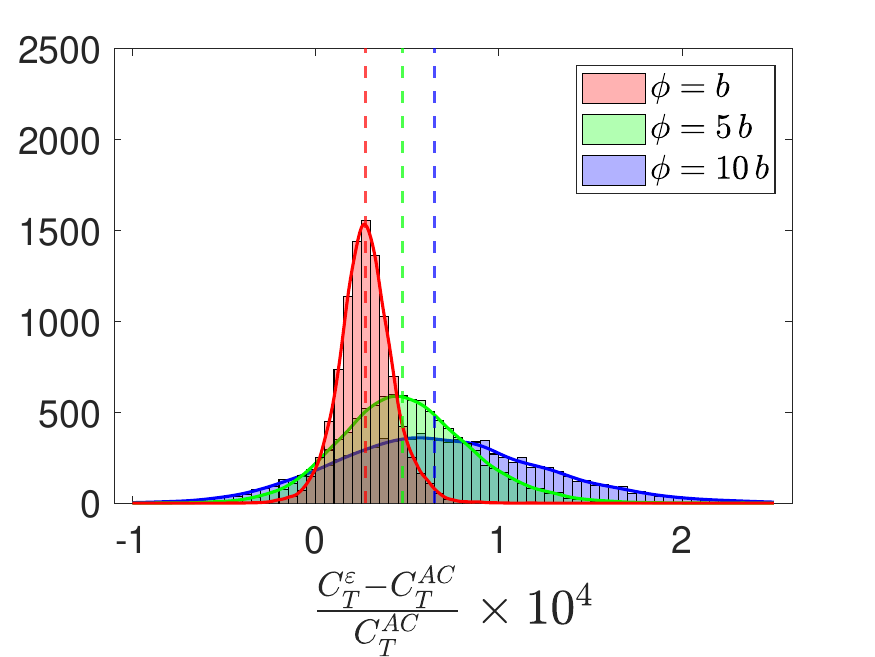}
    \caption{Histogram of the savings in basis points for various $\phi$ between first-order strategy and \yuri{TC and AC strategies. Dashed lines show the location of the corresponding median. To simplify notation, we use $C^\eps$ instead of $C^{\nu_1^\eps}$.}}
\label{fig:basis_pt}
\end{figure}

\section{Accuracy of the Approximation}\label{sec:approx_proof}

% In order to study the accuracy of the approximation (\ref{eq:first_order_approx}), we will assume for simplicity of notation that $\mu$ is one-dimensional since the difficulty in the analysis does not lie on this feature of the model.
In this section we  prove the accuracy of the approximation of $\cX^\eps$ by $\cX_0$. As noted earlier, once the accuracy for $h^{(2),\eps}$ is proved, the accuracy for the approximations for $h^{(1),\eps}$ and $h^{(0),\eps}$, which are the solutions of the \textit{linear} PDEs (\ref{eq:pde_h1eps}) and (\ref{eq:pde_h0eps}), follow from the usual arguments from \cite{multiscale_fouque_new_book}. The proof of accuracy for non-linear PDEs was first developed in \cite{fouque2020nonlinearaccuracy} under the different context of portfolio optimization.

Recall that $\cX^\eps$ satisfies the non-linear PDE (\ref{eq:pde_chieps}):
\begin{align}
\left\{
\begin{array}{rl}
\partial_t \cX^\eps - \phi + \frac{(\cX^\eps)^2}{k(t,y)} + \frac{1}{\eps} \cL_0  \cX^\eps &= 0,
\\
\cX^\eps(T,y) &= - \varphi + b/2.
\end{array}
\right. \label{eq:pde_chieps_proof}
\end{align}
The zero-order approximation $\cX_0$ is the solution of Ricatti ODE
\begin{align}
\left\{
\begin{array}{rl}
\displaystyle \cX'_0(t) - \phi + \frac{1}{\kappa(t)}\cX^2_0(t) &= 0,\\
\cX_0(T) &= -\varphi + b/2.
\end{array}
\right.
\end{align}
To provide precise bounds, we make the following standing assumption for the remainder of this section.\\

\begin{minipage}[c]{0.9\textwidth}
\begin{assumptions}\label{assump:proof}
We assume that
\begin{enumerate}
    \item $\kappa$ is bounded and bounded away from zero with bounded derivative; and
    \item $\eta$ is bounded, implying that the solutions to all Poisson equations considered below and their derivatives are bounded.\footnotemark
\end{enumerate}
    %\item $\cX_0$ is bounded. \yuri{is it enough the assumptions on $\kappa$ to have this?}
\end{assumptions}
\end{minipage}
\footnotetext{For the proof of Theorem \ref{thm:accuracy}, it is only required that the solutions to all Poisson equations considered below and their derivatives are bounded. To this effect, we would additionally need the source of these Poisson equations to decay at the boundary of the domain so we can ensure this behavior for its solution. For instance, this would be ensured if the source has compact support.}

\begin{theorem}\label{thm:accuracy}
With Assumptions \ref{assump:proof} enforced,  there exists $C > 0$ such that, for any $\eps < 1$,
$$|\cX^\eps(t,y) - \cX_0(t)| \leq C\, \eps,$$
for any $t \in [0,T]$ and $y \in \bR$.
\end{theorem}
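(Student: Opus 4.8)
The plan is to construct explicit sub- and super-solutions to the nonlinear PDE \eqref{eq:pde_chieps_proof} of the form $\cX_0(t) \pm \eps\,(\text{correction})(t,y) \pm C\eps\,(\text{boundary layer/constant})$, and then invoke a comparison principle for the semilinear parabolic operator. First I would make the natural ansatz suggested by the formal expansion: set $\cX^{\eps,\pm} := \cX_0(t) + \eps\,\cX_1(t,y) \pm C_0\,\eps^2$ (or, if needed to absorb the terminal mismatch, $\pm C_0\,\eps$), where $\cX_1$ is the solution of the Poisson equation \eqref{eq:chi0order}, namely $\cL_0\cX_1 = -\big(\partial_t\cX_0 - \phi + \cX_0^2/k(t,y)\big)$, which is centered precisely because $\cX_0$ solves the Riccati ODE \eqref{eqn:Ricatti-deterministic} with $\mathsf k = \kappa$ and $\langle 1/k(t,\cdot)\rangle = 1/\kappa(t)$. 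Under Assumptions \ref{assump:proof}, $\cX_1$ and its $t$-derivative are bounded.

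Next I would plug this ansatz into the operator $\mathcal N[u] := \partial_t u - \phi + u^2/k(t,y) + \tfrac1\eps \cL_0 u$. The $O(1/\eps)$ term vanishes because $\cL_0\cX_0 = 0$; the $O(1)$ terms cancel by the definition of $\cX_1$; and the remaining terms are $\eps\,\partial_t\cX_1 + \tfrac{\eps^2}{k}\cX_1^2 \pm \tfrac{2\eps C_0}{k}\cX_0 + \eps^2(\cdots)$, i.e. $\mathcal N[\cX^{\eps,\pm}] = \pm\tfrac{2C_0\eps}{k}\cX_0 + \eps\,\partial_t\cX_1 + O(\eps^2)$ uniformly in $(t,y)$. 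Here I must be slightly careful: the nonlinearity $u\mapsto u^2/k$ is locally Lipschitz but not globally, so I first establish an a priori $L^\infty$ bound on $\cX^\eps$ — for instance by noting that constants $M_\pm$ chosen large can serve as crude sub/super-solutions, giving $\cX^\eps$ bounded uniformly in $\eps$ on $[0,T]\times\bR$ — after which the relevant Lipschitz constant $L$ of $u^2/k$ is uniform. With the sign of $\cX_0$ controlled (it is negative, being a solution of the Riccati ODE with negative terminal data $-\varphi+b/2$ — assuming $\varphi > b/2$, or more generally that it stays bounded), I choose $C_0$ large enough that $\pm\tfrac{2C_0\eps}{k}\cX_0$ dominates the $O(\eps)$ remainder in the correct direction, making $\cX^{\eps,+}$ a super-solution and $\cX^{\eps,-}$ a sub-solution; the terminal condition matches up to $O(\eps\langle|\psi'|\rangle$-type constants$)$, absorbed by enlarging $C_0$.

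Finally I would apply the parabolic comparison principle on $[0,T]\times\bR$ (using the a priori bound to linearize the nonlinearity and standard uniqueness for the resulting linear equation with bounded coefficients — the operator $\tfrac1\eps\cL_0$ is uniformly elliptic in $y$ for each fixed $\eps$, and growth of coefficients is controlled by Assumption \ref{assump:proof}) to conclude $\cX^{\eps,-} \le \cX^\eps \le \cX^{\eps,+}$, whence $|\cX^\eps(t,y) - \cX_0(t)| \le \eps\|\cX_1\|_\infty + C_0\eps^2 \le C\eps$ for $\eps<1$. The main obstacle I anticipate is making the comparison principle rigorous on the unbounded $y$-domain with the $\tfrac1\eps$-scaled generator $\cL_0$: one needs either a growth condition at infinity on $\cX^\eps$ (which should follow from the probabilistic representation of $\cX^\eps$ together with the existence of the stationary distribution of $Y^\eps$, and from boundedness of $\eta$ which keeps $1/k(t,y)$ bounded), or to run the comparison argument on expanding bounded domains with a vanishing penalization at the lateral boundary. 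Boundedness of $\eta$ (Assumption \ref{assump:proof}(2)) is what keeps the nonlinear coefficient $1/k(t,y)$ bounded above and below and is essential here; the rest is the now-standard sub/super-solution bookkeeping of \cite{fouque2020nonlinearaccuracy}.
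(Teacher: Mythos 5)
Your overall architecture (sandwich $\cX^\eps$ between $\cX_0$ plus an $O(\eps)$ corrector from below and above) matches the paper's, but there are two substantive divergences, one of which is a genuine gap. The gap is in the corrector itself. Your definite-sign term at order $\eps$ is the cross term $\pm 2C_0\eps\,\cX_0/k$ coming from squaring $\cX_0+\eps\cX_1\pm C_0\eps$ (a constant shift contributes nothing through $\partial_t$ or $\cL_0$). For this to dominate the bounded, sign-indefinite, $y$-dependent residual $\eps\big(\partial_t\cX_1+2\cX_0\cX_1/k\big)$ (note you dropped the $2\eps\cX_0\cX_1/k$ piece, though it is of the same harmless type), you need $\cX_0$ to be negative and uniformly bounded away from zero on $[0,T]$. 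That is not implied by Assumptions \ref{assump:proof} — e.g.\ $\varphi=b/2$ with small $\phi$ makes $\cX_0$ vanish or nearly vanish near $T$, and then no choice of $C_0$ closes the sign argument. The paper's device is different and sign-free: it adds $\pm\eps\,(2T-t)\,C_1N(t)$, whose \emph{time derivative} supplies a term $\mp\eps C_1N(t)$ with $N>0$ bounded away from zero, so $C_1$ large always wins. This in turn forces two further ingredients you omit: the factor $N(t)=\exp\{-2\int_0^t\cX_0/\kappa\}$ is chosen so that the deterministic part of the cross term $2\xi(2T-t)C_1N$ cancels against $(2T-t)C_1N'$, and the $\eps^2 M(t,y)$ term is needed to center (via $\cL_0M$ at order $\eps$) the remaining $y$-fluctuating pieces — including one proportional to $C_1\eta(y)$, which would otherwise grow with $C_1$ and defeat the domination argument. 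So the extra terms in the paper's $\cX^\pm$ are not decoration; they are what makes the construction work without sign conditions on $\cX_0$.

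The second divergence is how the sandwich is converted into the inequality $\cX^-\le\cX^\eps\le\cX^+$. You invoke a parabolic comparison principle on $[0,T]\times\bR$ for the semilinear equation with the $\eps^{-1}$-scaled generator, and you correctly flag that justifying it on the unbounded domain (a priori bounds, linearization, growth at infinity) is the hard part — indeed your proposed crude constant super-solution does not work, since $\mathcal N[M]=-\phi+M^2/k\ge 0$ for large constants, the wrong sign. The paper avoids all of this by first rewriting \eqref{eq:pde_chieps_proof} as an HJB equation, $\sup_{\xi}\big\{-2\cX^\eps\xi-k\xi^2\big\}=(\cX^\eps)^2/k$, so that $\cX^\eps=\sup_{\xi\in\bB}Z(t,y,\xi)$. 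The lower bound then follows by evaluating $Z$ at the single feedback control $\xi_0=-\cX_0/k$ and applying It\^o's formula to $\cX^-$ (suboptimality), while the upper bound follows because $R^\xi[\cX^+]\le\sup_\xi R^\xi[\cX^+]\le 0$ for \emph{every} admissible $\xi$, so $Z(t,y,\xi)\le\cX^+(t,y)$ uniformly in $\xi$. This verification-style argument, from \cite{fouque2020nonlinearaccuracy}, replaces the comparison principle entirely and only requires the It\^o integral to be a true martingale, which Assumptions \ref{assump:proof} guarantee. If you keep your PDE-comparison route you must actually supply the missing uniqueness/comparison lemma; if you adopt the control representation, that obstacle disappears, but you still need to repair the corrector as above.
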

\begin{proof}
The proof involves three main steps: (i) recast the PDE \eqref{eq:pde_chieps} for $\cX^\eps$ in terms of an auxiliary control problem, (ii) construct a sub-solution and a super-solution of \eqref{eq:pde_chieps}, and (iii) use the sub/super-solutions to bound the error.

First, the PDE for $\cX^\eps$ may be recast as the following HJB equation
\begin{align}\label{eq:pde_chi_eps}
\left\{
\begin{array}{rl}
\partial_t \cX^\eps - \phi + \sup_{\xi \in \cB} \left\{-2\cX^\eps \xi - k(t,y)\xi^2 \right\} + \frac{1}{\eps} \cL_0 \cX^\eps &= 0,\\ \\
\cX^\eps(T,y) &= - \varphi + \tfrac{b}{2},
\end{array}
\right.
\end{align}
where $\cB = \bR$. Therefore, non-linear Feynman-Kac (see \cite{pham_control}) implies that $\cX^\eps$ admits the representation in terms of the auxiliary control problem
% \small
\begin{align}
\cX^\eps(t,y) &= \sup_{\xi \in \bB} Z(t,y,\xi),\\
Z(t,y,\xi) &= \bE_{t,y}\left[\;(- \varphi + \tfrac{b}{2})\, e^{-2\int_t^T \xi_udu} + \int_t^T e^{-2\int_t^s \xi_udu} \,(-\phi - k(s,Y_s^\eps)\, \xi_s^2)\,ds\;\right],
\end{align}
% \normalsize
where $\bB$ is a set of admissible controls taking values in $\cB$ that consists of $\mathcal{F}$-predictable processes  in $\mathbb{L}^2([0,T],\Omega)$.
%\seb{need some more on this... presumably in $\mathbb{L}^2([0,1]\times \Omega)$?}
%\seb{Also, should we not write $\bE_{t,y,\xi}$ as $\bE_t$ means conditional on $\mathcal{F}_t$ (and is a process), while $\bE_{t,y,\xi}$ means conditional on $Y_t^\eps=y, \Xi_t=\xi$?} controls taking values in $\cB$\seb{have we defined $\cB$ as yet?}.
We next proceed along the ideas developed in \cite{fouque2020nonlinearaccuracy} for the proof of accuracy of non-linear PDEs stemming from  portfolio optimisation problems.

To this end, define $\cX^\pm$ as follows
\begin{align}
\cX^\pm(t,y) &= \cX_0(t) + \eps \,\tilde{\cX}_1(t,y) \pm \eps \,(2T - t) \,C_1 \,N(t) \pm \eps^2\, M(t,y),
\label{eqn:def-xi-pm}
\end{align}
for constant $C_1 > 0$ and functions $\tilde{\cX}_1$, $N$ and $M$ to be defined shortly.

By the Poisson equation \eqref{eq:chi0order}, we find
\begin{align}
\cX_1(t,y) = - \psi(y) \frac{\cX_0^2(t)}{\kappa(t)} + c(t),
\end{align}
for some function $c$ independent of $y$ and $\psi$ is the solution of the Poisson equation $\cL_0 \psi = \eta$ which is defined only up to a deterministic function of time (constant in $y$). By choosing this function, we may therefore assume $\langle \psi \rangle = 0$ without loss of generality. %\seb{check that my rewrite makes sense.}
We define the $y$-dependent part of $\cX_1$ as
\begin{align}
\tilde{\cX}_1(t,y) := - \psi(y)\, \frac{\cX_0^2(t)}{\kappa(t)},
\end{align}
which sets one of the three free functions in the definition of $\cX^\pm$.

Next, define the differential operator
\begin{align}
R^\xi[\cX] = \partial_t \cX - \phi - 2\xi\cX - k(t,y)\xi^2 + \tfrac{1}{\eps}\cL_0\cX,
\end{align}
for any $\xi \in \cB$. The HJB equation \eqref{eq:pde_chi_eps} may be written as $\sup_{\xi \in \cB}R^\xi[\cX] = 0$.
For any control $\xi \in \bB$, It\^o's lemma implies
\begin{align}
&\hspace*{-2em}\cX^\pm(T,Y_T^\eps)\, e^{-2\int_t^T \xi_u du}  - \int_t^T e^{-2\int_t^s \xi_u du}\, (\phi + k(s,Y_s^\eps)\, \xi_s^2)\,ds
\label{eq:ito}
\\
=&\, \cX^\pm(t,y) + \int_t^T e^{-2\int_t^s \xi_u du} \,R^{\xi_s}[\cX^\pm](s,Y_s^\eps) \,ds
\\
&+ \frac{1}{\sqrt{\eps}} \int_t^T
e^{-2\int_t^s \xi_u du} \,\beta(Y_s^\eps) \,\partial_y \cX^\pm(s,Y_s^\eps) \, dW_s^*.
\end{align}
Taking expectation, since the It\^o integral above is a true martingale by Assumptions \ref{assump:proof}, we find
%\yuri{martingale property of It\^o integral. Need to write down assumptions on $\beta$ and $\psi'$, but it seems simple.}
% \small
\begin{align}\label{eq:Z_proof}
Z(t,y,\xi) + \eps\, \cH^\pm(t,y,\xi) = \cX^\pm(t,y) + \int_t^T \bE_t\left[\, e^{-2\int_t^s \xi_udu}\, R^{\xi_s}[\,\cX^\pm](s,Y_s^\eps) \,\right] \,ds,
\end{align}
% \normalsize
%\seb{again maybe use $\bE_{t,y\xi}$}where
% \small
\begin{align}
\cH^\pm(t,y,\xi) := \bE_{t,y}\left[\,e^{-2\int_t^T \xi_udu} \,\left(\tilde{\cX}_1(T,Y_T^\eps) \pm T \,C_1 N(T) \pm \eps\, M(T,Y_T^\eps)\right)\,\right]\,.
\end{align}
% \normalsize
Next, we construct particularly choices for the two remaining free functions  of $\cX^\pm$, defined in \eqref{eqn:def-xi-pm}, that provide sub and super solutions.

\subsection*{Sub-solution}

Let us first analyze $\cX^-$. Acting on it with the $R^\xi$ operator, we have
\begin{equation}
\begin{split}
R^\xi[\cX^-] =& \cX_0' - \phi -2 \xi \cX_0 - k(t,y) \xi^2  + \tfrac{1}{\eps}\cancelto{0}{\cL_0\cX_0}
\\
&-  \eps \partial_t \tilde{\cX}_1 -2 \eps\xi \tilde{\cX}_1 - \cL_0 \tilde{\cX}_1
\\
&+ \eps C_1 N(t) - \eps C_1 (2T - t) \left(\partial_t N - 2\xi N \right) \\
&- \eps^2(\partial_t M - 2\xi M) - \eps \cL_0 M.
\end{split}
\end{equation}
Define the particular auxiliary control
\begin{equation}
\xi_0(t,y) := - \frac{\cX_0(t)}{k(t,y)} = -(1 + \eta(y))\frac{\cX_0(t)}{\kappa(t)}.
\end{equation}
With this control, we have that
\begin{align}
\cX_0' - \phi -2 \xi_0 \cX_0 - k(t,y) \xi_0^2 = \cX_0' - \phi + \tfrac{\cX_0^2}{k(t,y)}\,,
\end{align}
hence, by \eqref{eq:chi0order}
\begin{align}
\cX_0' - \phi + \tfrac{\cX_0^2}{k(t,y)} + \cL_0 \tilde{\cX}_1 = 0.
\end{align}
Therefore,%\seb{*** UP TO HERE ***}
\begin{align}
\begin{split}
R^{\xi_0}[\cX^-] =&\,\eps\, \big\{\partial_t \tilde{\cX}_1 -2  \xi_0 \tilde{\cX}_1 + C_1 N(t)
- C_1 (2T - t) \left(N'(t) - 2\xi_0 N \right) - \cL_0 M\big\}
\\
&- \eps^2\,\left\{\partial_t M - 2\xi_0 M\right\}
\end{split}
\\
\begin{split}
=& \, \eps\,
\Big\{-\psi(y) \left(\partial_t (\cX^2/\kappa) + 2 \cX^3/\kappa^2 \right) -2 \eta(y)\psi(y) \cX^3/\kappa^2 + C_1 N(t)
\\
&\quad - C_1 (2T - t)  \left(N'(t) + 2\cX_0 N/\kappa \right) - C_1 (2T - t) \eta(y) \cX_0 N/\kappa - \cL_0 M\Big\}
\\
&- \eps^2\,\left\{\partial_t M - 2\xi_0 M\right\}.
\end{split}
\end{align}
We next choose $M$ such that the $\eps$ term is centered with respect to the invariant measure, that is
\begin{align}
M(t,y) &= -\Psi_1(y) \left(\partial_t (\cX_0^2/\kappa) + 2\cX_0^3/\kappa^2\right) - 2\Psi_2(y) \cX^3/\kappa^2 %\\
%&
- \psi(y) C_1 (2T - t) \cX_0 N /\kappa,
\end{align}
where we use that $\langle \eta \rangle = \langle \psi \rangle = 0$, and where
\begin{align}
\cL_0 \Psi_1 = \psi \quad \mbox{and}\quad  \cL_0 \Psi_2 = \eta \psi - \langle \eta \psi \rangle.
\end{align}
Therefore, we find
\begin{equation}
\begin{split}
R^{\xi_0}[\cX^-] =&\, \eps \left\{-2 \langle \psi \eta \rangle \frac{\cX^3}{\kappa^2} + C_1 N(t)
- C_1 (2T - t) \left(N'(t) + 2\frac{\cX_0\,N}{\kappa} \right)\right\}
\\
&- \eps^2\left\{\partial_t M - 2\xi_0 M\right\}.
\end{split}
\end{equation}
Next, we choose $N$ such that the term proportional to $(2T-t)$ vanishes, which is given by
\begin{align}
N(t) = \exp\left\{-2\int_0^t \frac{\cX_0(s)}{\kappa(s)}ds\right\}.
\end{align}
For the above choices of $N$ and $M$, we therefore obtain
\begin{equation}
R^{\xi_0}[\cX^-] =\eps \left(-2 \langle \psi \eta \rangle \cX^3/\kappa^2 + C_1 N(t) \right) - \eps^2(\partial_t M - 2\xi_0 M).
\end{equation}
By Assumptions \ref{assump:proof}, the first and last terms of the expression above are bounded. Hence, there exists $C_1$ large enough such that
\begin{align}
\int_t^T \bE_{t,y} \left[ e^{-2\int_t^s \xi_0(u,Y_u^\eps) du} \;R^{\xi_0}[\cX^-](s,Y_s^\eps) \right] ds \geq 0,
\end{align}
for any $t$ and $y$. %\seb{do you mean for any initial value of $y:=Y_t^\eps$ -- i.e. the initial condition on $Y_t^\eps$?}
%for some constant $C > 0$ independent of $C_2$.
%there exists a constant $C > 0$ that bounds (and bounds from below) $M$, its derivatives and all functions related to $\cX_0$ and $\kappa$.
% there exists a constant $C > 0$ such that
% \begin{align}
% \sup_{t, y} |M(t,y)| + \sup_{t, y} |\partial_t M(t,y) - 2\xi_0(t,y)M(t,y)| + \sup_{t, y} |\tilde{\cX}_1(t,y)| \leq C,\\
% \sup_{t, y} |\psi(y)|\frac{\cX_0^2(t)}{\kappa(t)}  \leq C,
% \end{align}
% Hence, since $N$ is increasing
% \begin{align}
% &R^{\xi_0}[\cX^-] \geq \eps (C + C_2 N(0))  - \eps^2 C = \eps(C(1 -\eps) + C_2 N(0))
% \end{align}
% % where
% % \begin{align}
% % C_3 = \min_{t \in [0,T]} \left(-2 \langle \psi \eta \rangle \cX^3/\kappa^2 \right) > 0.
% % \end{align}
% Hence, for any given $\eps \in (0,1)$, take $C_2 > $
Moreover, notice that%\seb{again maybe use $\bE_{t,y,\xi_0}$}
\begin{align*}
\cH^-(t,y,\xi_0)   = \bE_{t,y}\left[e^{-2\int_t^T \xi_0(u,Y_u^\eps) du} \left(\tilde{\cX}_1(T,Y_T^\eps) - T\,C_1\, N(T) - \eps\, M(T,Y_T^\eps)\right)\right].
\end{align*}
Again by Assumptions \ref{assump:proof}, if necessary, we increase $C_1$ such that the term inside the expectation that defines $\cH^-(t,y,\xi_0)$ is negative everywhere. Hence, by the definition of $\cX^\eps$ we find that, for any $(t,y) \in [0,T] \times \bR$,
\begin{align}
\cX^\eps(t,y) &\geq Z(t,y,\xi_0)
\\
&= \cX^-(t,y) - \eps \cH^-(t,y,\xi_0)
+ \int_t^T \bE_{t,y} \left[ e^{-2\int_t^s \xi_0(u,Y_u^\eps) du} R^{\xi_0}[\cX^+](s,Y_s^\eps) \right] ds \\
&\geq \cX^-(t,y).
\end{align}

\subsection*{Super-solution}

Let us now analyze $\cX^+$. We start by computing
\begin{align}
\sup_{\xi \in \cB} R^\xi[\cX^+] &= \partial_t \cX^+ - \phi + \frac{(\cX^+)^2}{k(t,y)} + \frac{1}{\eps} \cL_0 \cX^+
\\
\begin{split}
=& \eps\,\Big\{\partial_t \tilde{\cX}_1 + \frac{2}{k(t,y)} \cX_0 (\tilde{\cX}_1 + (2T-t) C_1 N + \eps M) \\
&  -C_1N + (2T-t) C_1\partial_t N  + \cL_0 M\Big\}
\\
& + \eps^2 \left\{\frac{1}{k(t,y)}(\tilde{\cX}_1 + (2T-t)C_1N + \eps M)^2 + \partial_t M\right\}.
\end{split}
\end{align}
The term of order $\eps$ is then
\begin{align}\label{eq:super_center}
\partial_t \tilde{\cX}_1 &+ \frac{2}{k(t,y)} \cX_0 (\tilde{\cX}_1 + (2T - t)C_1N)  -C_1N \\
&+ (2T - t)C_1\partial_t N + \cL_0M. \nonumber
\end{align}
Similarly to what was done before, we choose $M$ to center (\ref{eq:super_center}) with respect to the invariant distribution. Since $\langle \psi \rangle = \langle \eta \rangle = 0$, the term of order $\eps$ becomes
\begin{align}
-2 \langle \eta \psi \rangle \cX_0^3/\kappa^2  -C_1N + (2T - t)C_1(\partial_t N + 2\cX_0N/\kappa).
\end{align}
Choosing the same function $N$ as for the sub-solution, the term of order $\eps$ is given by
\begin{align}
-2 \langle \eta \psi \rangle \cX_0^3/\kappa^2  -C_1N.
\end{align}
By Assumptions \ref{assump:proof}, the term of order $\eps^2$ is bounded. We can then choose $C_1$ large enough so that $\sup_{\xi \in \cB} R^\xi[\cX^+] \leq 0$.
Furthermore,
\begin{align}
\cX^+(T,y) = (-\varphi + b/2) + \eps \tilde{\cX}_1(T) + \eps T C_1 N(T) + \eps^2 M(T,y).
\end{align}
Possibly increasing $C_1$ even more, we conclude that, say for $\eps \leq 1$, $\cX^+(T,y) \geq -A + b/2$ uniformly in $y$. Then, by It\^o's formula (\ref{eq:ito}),
\small
\begin{align}
&Z(t,y,\xi) \leq \bE_{t,y}\left[\cX^+(T,Y_T^\eps) e^{-2\int_t^T \xi_u du}  + \int_t^T e^{-2\int_t^s \xi_u du} (-\phi - k(s,Y_s^\eps) \xi_s^2)d \right]\\
&= \cX^+(t,y) +  \int_t^T \bE_{t,y} \left[ e^{-2\int_t^s \xi_udu} R^{\xi_s}[\cX^+](s,Y_s^\eps) \right] ds\\
&\leq \cX^+(t,y) +  \int_t^T \bE_{t,y} \left[ e^{-2\int_t^s \xi_udu} \sup_{\xi \in \cB} R^\xi[\cX^+](s,Y_s^\eps) \right] ds \leq \cX^+(t,y).
\end{align}
\normalsize
Therefore,
\begin{align}
\cX^\eps(t,y) \leq \cX^+(t,y).
\end{align}

We have finally concluded that there exists a constant $C > 0$ such that
\begin{align}
|\cX^\eps(t,y) - \cX_0(t)| \leq C \eps.
\end{align}
\end{proof}

\appendix

\section{Auxiliary Results for Deterministic Impact}
%\label{app:proofs}
%
%\subsection{Time-dependent temporary price impact}
\label{app:proofs_time_dep}

We present here the proof of the formulas presented in Section \ref{sec:deterministic_k}. We follow the steps of \cite{CarteaJaimungal2015}. Write $h(t,\bfmu,q) = h^{(0)}(t,\bfmu) +  h^{(1)}(t,\bfmu) q +  h^{(2)}(t,\bfmu)q^2$, with $h^{(0)}(T,\bfmu) = h^{(1)}(T,\bfmu) = 0 $ and $h^{(2)}(T,\bfmu) = -\varphi$. Therefore, we find the following PDEs for $h^{(i)}$:
\begin{align}
&\left(\partial_t + \cL_{\bfmu} \right)h^{(2)}(t,\bfmu) - \phi  + \frac{1}{4\mathsf{k}(t)}(b + 2h^{(2)}(t,\bfmu))^2 = 0,
\label{eqn:h2-deterministic-PDE}
\\[1em]
&\left(\partial_t + \cL_{\bfmu} \right)h^{(1)}(t,\bfmu) + \bfgamma \cdot \bfmu  + \frac{1}{2\mathsf{k}(t)}(b + 2h^{(2)}(t,\bfmu))h^{(1)}(t,\bfmu) = 0,
\label{eqn:h1-deterministic-PDE}
\\[1em]
&\left(\partial_t + \cL_{\bfmu} \right)h^{(0)}(t,\bfmu) + \frac{1}{4\mathsf{k}(t)}(h^{(1)})^2(t,\bfmu) = 0.
\label{eqn:h0-deterministic-PDE}
\end{align}

To solve the first equation above, notice that we may assume $h^{(2)} = h^{(2)}(t)$ independent of $\bfmu$. Define
\begin{align}\label{eq:chi}
\cX(t) = \frac{b}{2} + h^{(2)}(t).
\end{align}
and notice that
$$\cX'(t) - \phi + \frac{1}{\mathsf{k}(t)} \cX^2(t) = 0.$$
Now, $h_1$ satisfies
\begin{align}
\left(\partial_t + \cL_{\bfmu} \right)h^{(1)}(t,\bfmu) + \bfgamma \cdot \bfmu  + \frac{\cX(t)}{\mathsf{k}(t)}h^{(1)}(t,\bfmu) = 0.
\end{align}
By Feynman-Kac's representation,
\begin{align}
h^{(1)}(t,\bfmu) &= \bE\left[\int_t^T e^{\int_t^s \frac{\cX(u)}{\mathsf{k}(u)} du} (\bfgamma \cdot \bfmu_s) ds \ \Big| \ \bfmu_t = \bfmu \right] \\
&=  \int_t^T e^{\int_t^s \frac{\cX(u)}{\mathsf{k}(u)} du} \bE[\bfgamma \cdot \bfmu_s \  | \ \bfmu_t = \bfmu] ds, \\
h^{(0)}(t,\bfmu) &= \bE\left[\int_t^T  \frac{1}{4\mathsf{k}(s)}(h^{(1)})(s,\bfmu_s) ds \ \Big| \ \bfmu_t = \bfmu \right]\\
&=\int_t^T \frac{1}{4\mathsf{k}(s)} \bE\left[  (h^{(1)})^2(s,\bfmu_s) \ | \ \bfmu_t = \bfmu \right]ds.
\end{align}

\vspace{10pt}

% \subsection{Proof of Proposition \ref{prop:G1}}\label{app:proofs_h1}

% From the PDE (\ref{eq:PDEh1}), if we write $h_1(t,\bfmu,q) = V_1 \cdot \phi(t,\bfmu,q)$ and require each entry of $\phi$ to satisfy
% \begin{align}
% \left(\partial_t + \cL_{\bfmu} \right)\phi_1^{(i)} - v_0(t,\bfmu,q) \partial_q \phi_1^{(i)}  + F^{(i)}(t,\bfmu,q) = 0,
% \end{align}
% then $h_1$ satisfies the PDE (\ref{eq:PDEh1}). We consider a similar ansatz $\phi_1(t,\bfmu,q) = \phi_0(t,\bfmu) + q \phi_1(t,\bfmu)$ with $\phi_0(T,\bfmu)$ $= \phi_1(T,\bfmu)$ $= 0$. Thus, we find the following PDEs for $\phi_i$:
% \begin{align}
% &\left(\partial_t + \cL_{\bfmu} \right) \phi_1 + \frac{\cX(t)}{\kappa(t)}\phi_1  + F_1(t,\bfmu) = 0, \\ \nonumber \\
% &\left(\partial_t + \cL_{\bfmu} \right)\phi_0 + \frac{1}{2\kappa(t)}\hcj^{(1)}\phi_1 + F_0(t,\bfmu) = 0,
% \end{align}
% where $\cX$ is given by Equation (\ref{eq:chi}) and we are using Equation (\ref{eq:v_0}). By Feynman-Kac's formula,
% \begin{align}
% \phi_1(t,\bfmu) &= \int_t^T e^{\int_t^s \frac{\cX(u)}{\kappa(u)} du} \bE[F_1(s,\bfmu_s) \  | \ \bfmu_t = \bfmu] ds, \\
% \phi_0(t,\bfmu) &= \int_t^T   \bE\left[\frac{1}{2\kappa(s)}\hcj^{(1)}(s,\bfmu_s) + F_0(t,\bfmu_s)   \ \Big| \ \bfmu_t = \bfmu \right]ds.
% \end{align}
% \qed

%\section*{Acknowledgments}

\bibliographystyle{siamplain}
\bibliography{main}

\end{document}